\tikzstyle{gate}=[shape=rectangle, text height=1.5ex, text depth=0.25ex, yshift=0.5mm, fill=white, draw=black, minimum height=5mm, yshift=-0.5mm, minimum width=5mm, font={\small}, tikzit category=circuit]
\tikzstyle{big gate}=[shape=rectangle, text height=1.5ex, text depth=0.25ex, yshift=0.5mm, fill=white, draw=black, minimum height=10mm, yshift=-0.5mm, minimum width=5mm, font={\small}, tikzit category=circuit]
\tikzstyle{Z dot}=[inner sep=0mm, minimum size=2mm, shape=circle, draw=black, fill={rgb,255: red,221; green,255; blue,221}, tikzit category=zx]
\tikzstyle{Z phase dot}=[minimum size=5mm, font={\footnotesize\boldmath}, shape=rectangle, rounded corners=2mm, inner sep=0.2mm, outer sep=-2mm, scale=0.8, tikzit shape=rectangle, draw=black, fill={rgb,255: red,221; green,255; blue,221}, tikzit draw=blue, tikzit category=zx]
\tikzstyle{X dot}=[Z dot, shape=circle, draw=black, fill={rgb,255: red,255; green,136; blue,136}, tikzit category=zx]
\tikzstyle{X phase dot}=[Z phase dot, tikzit shape=rectangle, tikzit draw=blue, fill={rgb,255: red,255; green,136; blue,136}, font={\footnotesize\boldmath}, tikzit category=zx]
\tikzstyle{hadamard}=[fill=yellow, draw=black, shape=rectangle, inner sep=0.6mm, minimum height=1.5mm, minimum width=1.5mm, tikzit category=zx]
\tikzstyle{paulibox}=[fill={rgb,255: red,221; green,221; blue,255}, draw=black, shape=rectangle, inner sep=0.6mm, minimum height=5mm, minimum width=5mm, font={\footnotesize}, text height=1.5ex, text depth=0.25ex, tikzit category=zx]
\tikzstyle{vertex}=[inner sep=0mm, minimum size=1mm, shape=circle, draw=black, fill=black, tikzit category=misc]
\tikzstyle{vertex set}=[inner sep=0mm, minimum size=1mm, shape=circle, draw=black, fill=white, font={\footnotesize\boldmath}, tikzit category=misc]
\tikzstyle{small black dot}=[fill=black, draw=black, shape=circle, inner sep=0pt, minimum width=1.2mm, tikzit category=circuit]
\tikzstyle{cnot ctrl}=[fill=black, draw=black, shape=circle, inner sep=0pt, minimum width=1.2mm, tikzit category=circuit]
\tikzstyle{cnot targ}=[fill=white, draw=white, shape=circle, tikzit category=circuit, label={center:$\oplus$}, inner sep=0pt, minimum width=2.1mm, tikzit fill={rgb,255: red,102; green,204; blue,255}, tikzit draw=black]
\tikzstyle{ket}=[fill=white, draw=black, shape=regular polygon, regular polygon sides=3, regular polygon rotate=-30, scale=0.7, inner sep=1pt, tikzit category=circuit, tikzit shape=rectangle, tikzit fill=green]
\tikzstyle{bra}=[fill=white, draw=black, shape=regular polygon, regular polygon sides=3, regular polygon rotate=30, scale=0.7, inner sep=1pt, tikzit category=circuit, tikzit shape=rectangle, tikzit fill=red]
\tikzstyle{scalar}=[shape=rectangle, text height=1.5ex, text depth=0.25ex, yshift=0.5mm, fill=white, draw=black, minimum height=5mm, yshift=-0.5mm, minimum width=5mm, font={\small}, rounded corners=2mm]
\tikzstyle{clabel}=[fill=white, draw=none, shape=rectangle, tikzit fill={rgb,255: red,56; green,255; blue,242}, font={\footnotesize}, inner sep=1pt, tikzit category=labels]
\tikzstyle{empty diagram}=[draw={gray!40!white}, dashed, shape=rectangle, minimum width=1cm, minimum height=1cm, tikzit category=misc]
\tikzstyle{amap}=[fill=white, draw=black, shape=NEbox, tikzit category=asymmetric, tikzit fill=yellow, tikzit shape=rectangle]
\tikzstyle{amap conj}=[fill=white, draw=black, shape=NWbox, tikzit category=asymmetric, tikzit fill=green, tikzit shape=rectangle]
\tikzstyle{amap adj}=[fill=white, draw=black, shape=SEbox, tikzit category=asymmetric, tikzit fill=red, tikzit shape=rectangle]
\tikzstyle{amap trans}=[fill=white, draw=black, shape=SWbox, tikzit category=asymmetric, tikzit fill=orange, tikzit shape=rectangle]
\tikzstyle{astate}=[fill=white, draw=black, shape=NEtriangle, tikzit category=asymmetric, tikzit shape=circle, tikzit fill=yellow]
\tikzstyle{astate conj}=[fill=white, draw=black, shape=NWtriangle, tikzit category=asymmetric, tikzit shape=circle, tikzit fill=green]
\tikzstyle{astate adj}=[fill=white, draw=black, shape=SEtriangle, tikzit category=asymmetric, tikzit shape=circle, tikzit fill=red]
\tikzstyle{astate trans}=[fill=white, draw=black, shape=SWtriangle, tikzit category=asymmetric, tikzit shape=circle, tikzit fill=orange]
\tikzstyle{bigbox}=[fill=white, draw=black, shape=rectangle, tikzit category=zx, minimum width=1.5cm, minimum height=2.6cm]
\tikzstyle{hadamard edge}=[-, dashed, dash pattern=on 2pt off 0.5pt, thick, draw={rgb,255: red,68; green,136; blue,255}]
\tikzstyle{box edge}=[-, dashed, dash pattern=on 2pt off 0.5pt, thick, draw={rgb,255: red,203; green,192; blue,225}]
\tikzstyle{brace edge}=[-, tikzit draw=blue, decorate, decoration={brace,amplitude=1mm,raise=-1mm}]
\tikzstyle{diredge}=[->]
\tikzstyle{double edge}=[-, double, shorten <=-1mm, shorten >=-1mm, double distance=2pt]
\tikzstyle{gray edge}=[-, {gray!60!white}]
\tikzstyle{pointer edge}=[->, very thick, gray]
\tikzstyle{boldedge}=[-, line width=1.6pt, shorten <=-0.17mm, shorten >=-0.17mm]
\tikzstyle{bidir edge}=[<->, very thick, draw={rgb,255: red,191; green,191; blue,191}]
\tikzstyle{separator edge}=[-, dashed, dash pattern=on 2pt off 0.5pt, thick, draw={rgb,255: red,153; green,153; blue,153}]
\tikzstyle{dashed edge}=[-, dashed, dash pattern=on 2pt off 0.5pt, thick]
\tikzstyle{solid edge}=[-, thick]
\newcommand{\image}{\mathrm{Image}}
\newtheorem{lemma}{Lemma}
\title{Fast Classical Simulation of Quantum Circuits via Parametric Rewriting in the ZX-Calculus}
\author{Matthew Sutcliffe
\institute{Department of Computer Science\\ University of Oxford\\ Oxford, UK}
\email{matthew.sutcliffe@cs.ox.ac.uk}
\and
Aleks Kissinger
\institute{Department of Computer Science\\ University of Oxford\\ Oxford, UK}
\email{aleks.kissinger@cs.ox.ac.uk}
}
\begin{document}
\maketitle

\begin{abstract}
The ZX-calculus is an algebraic formalism that allows quantum computations to be simplified via a small number of simple graphical rewrite rules. Recently, it was shown that, when combined with a family of ``sum-over-Cliffords'' techniques, the ZX-calculus provides a powerful tool for classical simulation of quantum circuits. However, for several important classical simulation tasks, such as computing the probabilities associated with many measurement outcomes of a single quantum circuit, this technique results in reductions over many very similar diagrams, where much of the same computational work is repeated. In this paper, we show that the majority of this work can be shared across branches, by developing reduction strategies that can be run parametrically on diagrams with boolean free parameters. As parameters only need to be fixed after the bulk of the simplification work is already done, we show that it is possible to perform the final stage of classical simulation quickly utilising a high degree of GPU parallelism. Using these methods, we demonstrate an average speedup factor of $78.3\pm10.2$ for certain classical simulation tasks vs. the non-parametric approach.
\end{abstract}

\section{Introduction}

The ZX-calculus~\cite{CD2,DBLP:journals/quant-ph/Wetering20} is a useful tool for expressing a broad range of quantum computations, including quantum circuits, as a type of labelled graph called a \textit{ZX-diagram}, and subsequently reducing it to a simpler form using a handful of graph rewriting rules. Recently, it has been applied extensively in optimisation~\cite{duncan2019graph,Cowtan2020phasegadget,deBeaudrapN2020treducspidernest,borgna2021hybrid,gogioso2023annealing,mcelvanney2023flowpreserving,nagele2023optimizing} and classical simulation~\cite{DBLP:journals/quant-ph/Kissinger21,kissinger2022classical,10.1145/3489517.3530627,Codsi2022Masters,codsi2022classically,cam2023speeding,koch2023speedy,sutcliffe-procopt,koziell-pipe2024towards,ahmad-sutcliffe,ahmad2024,sutcliffeReview} of quantum circuits.

The latter application makes use of the ZX-calculus to simplify circuits as much as possible, before relying on the \textit{sum-over-Cliffords} method of \textit{stabiliser decomposition} introduced by Bravyi, Smith, and Smolin \cite{DBLP:journals/physical-review/BSS} as well as Bravyi et al \cite{bravyi19}. This approach amounts to decomposing a non-Clifford quantum circuit with measurements into an exponentially large sum of efficiently reducible Clifford terms, in order to deduce the probability amplitude of particular measurement outcomes. Applying ZX-calculus simplification at each step of the decomposition, as proposed by Kissinger and van de Wetering \cite{DBLP:journals/quant-ph/Kissinger21}, makes this process more efficient.

In this paper, we identify that when multiple measurement amplitudes or probabilities are needed for the same circuit (such as when sampling the output distribution many times or computing marginal probabilities), the entire procedure needs to be repeated from the start to compute each amplitude independently. We address this issue by adapting the rewriting rules and simplification procedure of the ZX-calculus to support boolean free parameters.

Remarkably, the graph-theoretic simplification of such diagrams can be done in a way that is agnostic to the value of these parameters, which enables us to delay fixing them until near the end of the calculation. Ultimately, this allows –-- once the circuit has been reduced for the generalised case –-- for the respective results of different choices of free parameters to be calculated very rapidly, by simply evaluating a real-valued polynomial with boolean free variables. Furthermore, we show that such polynomials can be represented in such a way that these evaluations can be calculated in parallel on a GPU. Consequently, as the following sections will show, after the initial generalised reduction, every subsequent choice of parameters can be evaluated in a small fraction of the time that would be achieved
by repeating the whole process.

\section{Background}

\subsection{ZX-calculus}
\label{subsec:zx-calc}
Quantum algorithms are commonly expressed graphically in the form of quantum circuit diagrams \cite{DBLP:books/nielsen}, which show the sequence of gates acting upon the qubits. A useful alternative notation is that of the ZX-calculus, wherein quantum computations are expressed as \textit{ZX-diagrams} \cite{DBLP:books/pqp}. These are undirected labelled graphs composed of Z- and X- `\textit{spiders}'. Semantically, Z-spiders can be interpreted ``generalised Z-phase gates'', which are $2^n \times 2^m$ matrices with a 1 in the top-left corner, $e^{i\alpha}$ in the bottom-right, and 0s elsewhere:
\[
\hfill
\left\llbracket \ \tikzfig{z-spider}\  \right\rrbracket
\;:=\;
\begin{pmatrix}
1 & 0 & \cdots & 0 \\
0 & 0 & \cdots & 0 \\
\vdots & \vdots   & \ddots & \\
0  & 0  & & e^{i\alpha} \\
  \end{pmatrix}
  \hfill
\]
Similarly, X-spiders are generalised X-phase gates. One way to define them is by conjugating Z-spiders by Hadamard gates:
\[
\hfill
\tikzfig{x-as-z}
\qquad\textrm{where}\qquad
\left\llbracket \ \tikzfig{h} \ \right\rrbracket
=
\frac{1}{\sqrt{2}}
\begin{pmatrix}
1 & 1 \\
1 & -1
\end{pmatrix}
\hfill
\]
More complex ZX-diagrams can be interpreted much like quantum circuits in terms of composition and tensor products of their components:
\[
\hfill
\left\llbracket \ \tikzfig{d-then-e} \ \right\rrbracket
=
\left\llbracket \ \tikzfig{e-box} \ \right\rrbracket \circ
\left\llbracket \ \tikzfig{d-box} \ \right\rrbracket
\quad
\left\llbracket \ \tikzfig{d-while-e} \ \right\rrbracket
=
\left\llbracket \ \tikzfig{d-box} \ \right\rrbracket \otimes
\left\llbracket \ \tikzfig{e-box} \ \right\rrbracket
\hfill \]
and crossing wires are interpreted as ``swap'' gates: $\textrm{SWAP}(v \otimes w) = w \otimes v$.
Whereas quantum circuits always describe unitary matrices, ZX-diagrams can in general be non-unitary.
In particular, a ZX-diagram with $m$ inputs and $n$ outputs describes a $2^n \times 2^m$ matrix. If it has no inputs or outputs, it describes a $2^0 \times 2^0 = 1 \times 1$ matrix, i.e. a scalar.
An important property of ZX-diagrams is that isomorphic graphs semantically describe the same linear map. This is sometimes stated as a ``meta-rule'' of ZX-calculus: \textit{only connectivity matters}.


The ZX-calculus comes equipped with a number of \textit{rewrite rules} that describe how certain structures within a ZX-diagram may be re-expressed as an equivalent (generally simplified) structure. Figure \ref{fig:basicrules} shows the elementary rewrite rules. The rules in Figure \ref{fig:basicrules} are sufficient to efficiently reduce any \textit{Clifford} ZX-diagrams (those diagrams whose angles are restricted to integer multiples of $\frac{\pi}{2}$~\cite{Backens1}) with no open inputs or output to a scalar.

\begin{figure}[h]
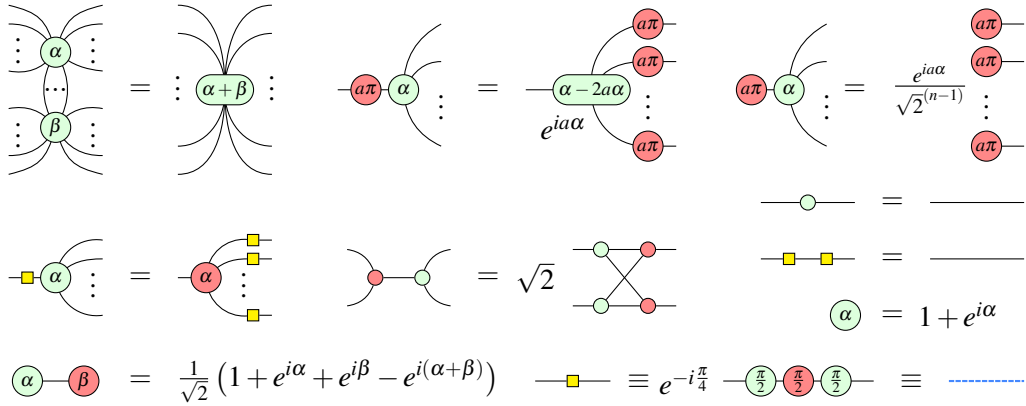

\tikzfig{basic-rules}
\centering
\caption{A set of the basic rewrite rules and scalar relations of ZX-calculus, where Greek letters denote arbitrary real variables, $[0,2\pi)$, and Latin letters arbitrary boolean variables, $\{0,1\}$. Note that the rules still apply if all the spider colours are inverted.}
\label{fig:basicrules}
\end{figure}

When extending to the Clifford+T gateset, which is sufficient to approximately express any quantum map \cite{Backens1,ng-completeness-2018,BDLP:theses/wang}, this efficiency is lost. When simplifying ZX-diagrams containing T-like gates (corresponding to spiders with phases of \textit{odd} multiples of $\frac{\pi}{4}$), some of these T-spiders may be liable to fuse and cancel, or be otherwise transferred from the diagram to the scalar factor, but many will typically hinder further simplification. When reaching such apparent dead-ends, however, one may continue the diagram's reduction by decomposing the T-gates into sums of Clifford terms.

The BSS decomposition, for instance (introduced in \cite{DBLP:journals/physical-review/BSS}) allows sets of $6$ T-spiders to be exchanged for sums of $7$ Cliffords. As per \cite{DBLP:journals/quant-ph/Kissinger21}, this can be expressed in ZX-calculus form as in Figure \ref{fig:bssdecomp}. Consequently, utilising such a decomposition to remove T-gates, and hence allowing the circuit to be further reduced, results in an exponential increase in the number of terms to compute (and correspondingly the runtime) versus the number of T-gates, $t$, in the initial simplified circuit. In the case of the BSS decomposition, this complexity is given by $7^{t/6}\approx2^{0.47t}$. Hence, for a general decomposition complexity, $2^{\alpha t}$, BSS has an $\alpha\approx0.47$ (where lower values of $\alpha$ represent more efficient decompositions). The current state of the art T-decompositions can achieve $\alpha \approx 0.396$~\cite{Qassim2021improvedupperbounds}.

\begin{figure}[h]
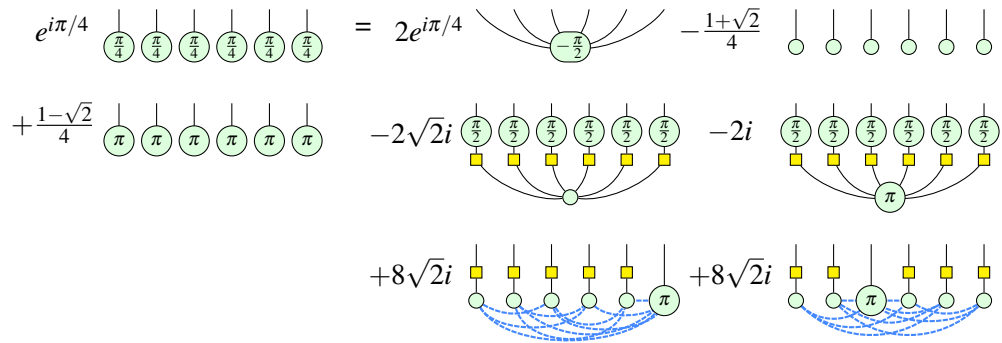

\tikzfig{bss-decomp}
\centering
\caption{The BSS decomposition \cite{DBLP:journals/physical-review/BSS}, relating a set of $6$ T-gates to a sum of $7$ Clifford terms, expressed in ZX-calculus terms as per \cite{DBLP:journals/quant-ph/Kissinger21}.}
\label{fig:bssdecomp}
\end{figure}

Furthermore, as explored in \cite{DBLP:journals/quant-ph/Kissinger21}, at every stage of applying the decomposition to reduce the T-count (number of T-gates), the rewrite rules of ZX-calculus may be applied in order to, where possible, remove any T-gates that are now liable for removal. Given this extra step, the $\alpha$ values may be seen to represent, fairly na\"ively, an estimate of the \textit{upper-bound} of the number of terms that a Clifford+T circuit will decompose to.

\subsection{Classical simulation}
\label{subsec:classicsim}

A very prominent task in the field is that of classically simulating quantum circuits, in order to, among other things, verify quantum algorithms and hardware. More specifically, this can be divided into \textit{strong} and \textit{weak} simulation, whereby the former is to deduce arbitrary (potentially marginal) probabilities associated with specific measurement outcomes, while the latter is to produce sample outputs of a circuit in accordance with its output probability distribution.

As the name suggests, being able to perform strong simulation implies we can do weak simulation. However, this relies crucially on having access to marginal probabilities. Suppose one wishes to sample a bitstring $(a_1, \ldots, a_n)$ from the probability distribution associated with measuring the output of a quantum circuit. One can first compute the marginal probability $P(x_1 = 0)$, then set the first bit of the output $a_1 := 0$ with probability $P(x_1 = 0)$ and otherwise set $a_1 := 1$. We can compute the marginal probability $P(x_1 = a_1, x_2 = 0)$ and set the next bit $a_2 := 0$ with probability $P(x_1 = a_1, x_2 = 0)/P(x_1 = a_1)$ and set $a_2 := 1$ otherwise, and so on. Using the product rule, we see that the resulting bitstring will be sampled according to the distribution $P(x_1, \ldots, x_n)$.

A circuit may be strongly simulated by plugging a fixed input state into the inputs and the conjugate-transpose of the state associated with a particular measurement outcome into the outputs. The resulting closed diagram can then be reduced to a single complex number $\lambda$ via the means outlined in section \ref{subsec:zx-calc}. The quantity $|\lambda|^2$ then corresponds to the probability associated with that particular measurement outcome computed according to the Born rule of quantum theory. One may also deduce marginal probabilities, concerning measurement outcomes where some qubit outputs ($a_{1},...,a_{k}$) are set, while not caring about the others ($b_{1},...,b_{m}$). Two means of denoting this in ZX-calculus (namely the `\textit{summing}' and `\textit{doubling}' approaches) are shown in Figure \ref{fig:classicsim} \cite{DBLP:journals/quant-ph/Kissinger21}. As to which of these two options is more efficient varies from circuit to circuit, depending on a few factors, including the number of qubits and how much cancellation is facilitated by partially composing against its own adjoint.

Both methods for computing marginal probabilities potentially incur an exponential overhead. The `summing' approach requires summing over $2^m$ distinct terms, whereas the `doubling' approach takes advantage of properties of the Born rule and ZX-diagrams to eliminate the sum, but doubles the size of the diagram, which for a Clifford+T circuit will double its T-count.


\begin{figure}[h]
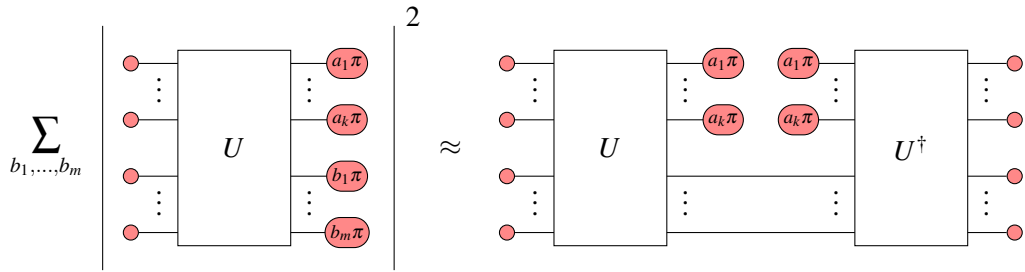

\tikzfig{classicsim}
\centering
\caption{Two means of denoting as a ZX-diagram (up to a scalar factor) the probability that a circuit, $U$, will produce the qubit outcomes $a_{1},...,a_{k}$, irrespective of the qubit outcomes $b_{1},...,b_{m}$ \cite{DBLP:journals/quant-ph/Kissinger21}. Note that $k+m=n$, for a circuit of $n$ total qubits.}
\label{fig:classicsim}
\end{figure}


\subsection{GPU parallelism}

A GPU is a specialised hardware component optimised for computing `\textit{Single Instruction Multiple Data}' (\textit{SIMD}) tasks very efficiently. Equipped with many thousands of parallel \textit{threads}, a GPU is adept at performing simple functions, relying on basic arithmetic, to many units of data simultaneously \cite{cuda}. Importantly, as the instructions are executed in lock-step across the threads\footnote{This is a slight oversimplification, as different `\textit{warps}' of $32$ threads may execute out of sync.}, branching code (such as produced by `If' statements) should be avoided, and as each thread is much less powerful than a CPU core, the GPU functions (or `\textit{kernels}') should rely only upon simple arithmetic.

Note that there is a runtime overhead involved in physically transferring the data across hardware (from the CPU to the GPU and back). If the number of data elements to process exceeds the number of available parallel threads, then this data transfer may be `\textit{pipelined}'. This means sending an initial batch of data and beginning its processing while the next batch is in transit. When applicable, this may render the data transfer time negligible and avoid running out of memory space on the GPU.

\section{Methods}

\subsection{Parameterising ZX-calculus}

Computing a $k$-qubit marginal probability of a quantum circuit via the summation method (left-hand side of Figure \ref{fig:classicsim}) involves reducing $2^k$ almost identical ZX-diagrams to scalars and summing the results. For circuits of relatively high T-counts, $t$, each such reduction may be very slow, requiring the computation of up to $2^{\alpha t}$ stabiliser terms, with $\alpha$ given by the decomposition efficiency.

Consequently, devising a means of reducing such similar ZX-diagrams as one, rather than reducing each independently, would be of great utility. Specifically, such sets of ZX-diagrams are structurally identical and differing only insofar as some spiders vary by a phase of $\pm\pi$. We may call such sets of ZX-diagrams `\textit{parametrically symmetric}'. Any parametrically symmetric set may therefore be expressed as a single parameterised ZX-diagram, with every spider phase restricted to $\alpha\in\mathbb{R}$ or:

\begin{equation}
    \label{eqn:param-phase}
    \Phi = (\mathfrak{p}_1 \oplus \mathfrak{p}_2 \oplus\ldots\oplus \mathfrak{p}_n)\pi+\alpha
\end{equation}

where $\oplus$ denotes the XOR operator (addition modulo $2$), $\alpha\in\mathbb{R}$, and $\mathfrak{p}_i\;\forall i=1,2,\ldots,n$ for $n\geq1$ are uninstantiated boolean parameters, such that $\mathfrak{p}_i$ is unfixed but may later be instantiated to $\mathfrak{p}_i\rightarrow p_i$, where $p_i\in\mathbb{B}$. As such, while the parameter takes no fixed value, its \textit{image} (set of possible values to which it may later be instantiated) is known: $\image(\mathfrak{p}_i)=\{0,1\}$. Given this formalism, any parameterised phase, which necessarily takes the form of Equation \ref{eqn:param-phase}, is such that:

\begin{equation}
    \image(\Phi)=\{\alpha,\alpha+\pi\}
\end{equation}
given a particular $\alpha\in\mathbb{R}$. Every phase, therefore, is either fixed and known, $\alpha\in\mathbb{R}$, or parameterised such that it may later be instantiated to $\alpha$ or $\alpha+\pi$.\footnote{Appendix \ref{app:no-ambig-rule} provides further justification for this restriction.} Given this restriction, the parameterised phases may be said to be `\textit{polarised}' and such ZX-diagrams said to be `\textit{polar-parameterised}'.

Hence, any parametrically symmetric set of ZX-diagrams, expressed as a single polar-parameterised ZX-diagram, may be simplified as a single entity via the polar-parameterised versions of the rewriting rules summarised in Figure \ref{fig:zx-rules-param}. Each is derivable from their non-parametric counterparts. Polar-parameterisations of additional rewriting rules, derivable from this fundamental set, are also included in Appendix \ref{app:more-rules}. Lastly, note that every phase is implicitly modulo $2\pi$ and $\image(\Phi)\subseteq\{0,\pi\}$, for instance, means the phase $\Phi$ may either be fixed at $0$ or $\pi$ or indeed polar-parameterised as $\Phi=(\mathfrak{p}_1\oplus\mathfrak{p}_2\oplus\ldots\oplus\mathfrak{p}_n)\pi$ such that $\image(\Phi)=\{0,\pi\}$.

\begin{figure}
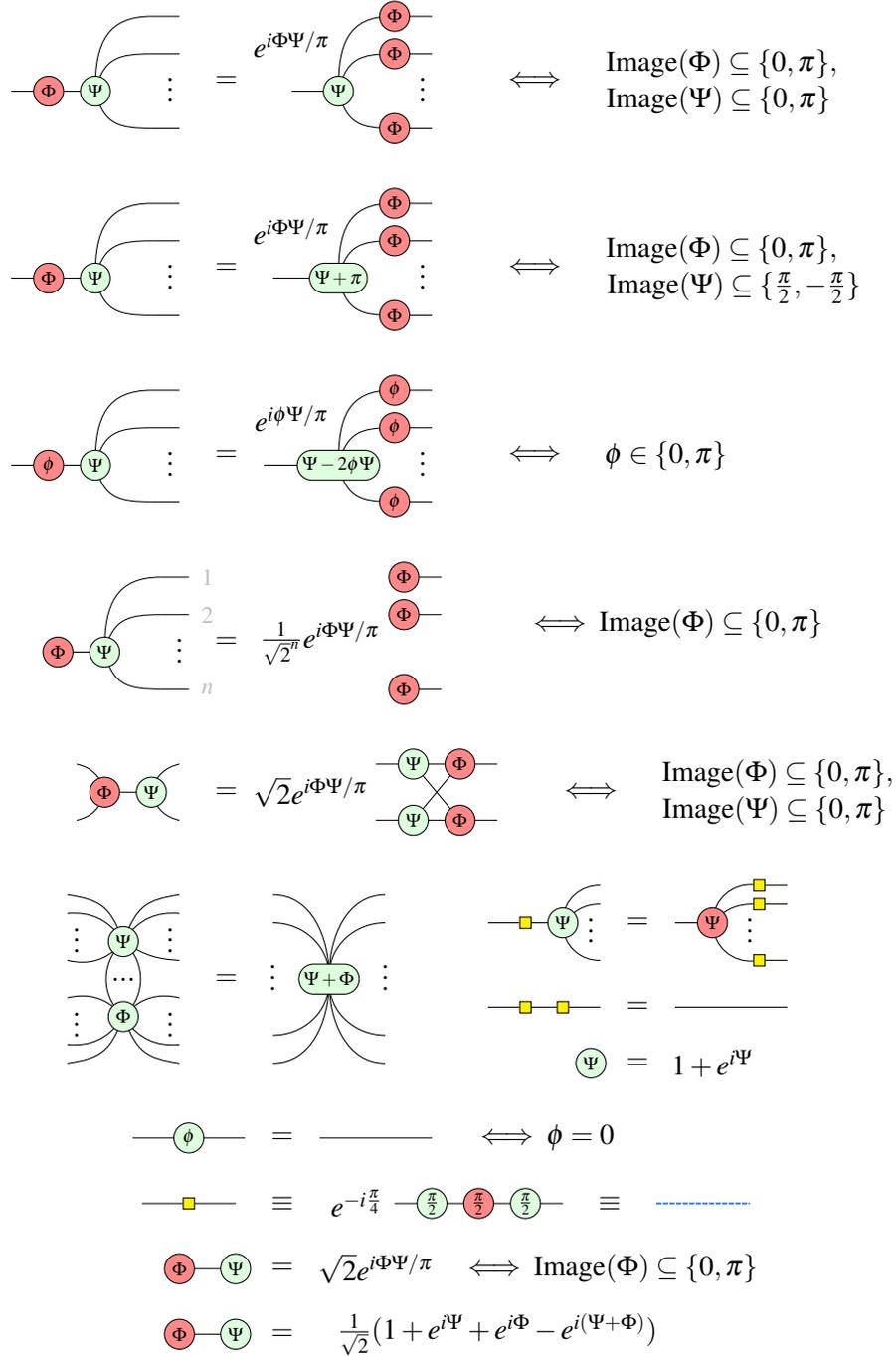

    \centering
    \tikzfig{param-rules}
    \caption{The complete set of polar-parameterised rewriting rules, where $\Psi$ and $\Phi$ are parameterised or fixed phases such that $\image(\Psi)\subseteq\{\alpha,\alpha+\pi\}$ and $\image(\Phi)\subseteq\{\beta,\beta+\pi\}$, given $\alpha,\beta\in\mathbb{R}$. Meanwhile, $\phi\in\mathbb{R}$ denotes a strictly fixed (non-parametric) phase.}
    \label{fig:zx-rules-param}
\end{figure}

Close inspection of these rules will reveal that, for Clifford ZX-diagrams, they represent perfect generalisations\footnote{Strictly speaking, the identity removal rule is an exception. Since this rule requires a specific phase of $0$, it is not possible to generalise this to polar-parameterised phases. However, this is not a problem in practice as such instances may be pushed to one side via polar-parameterised $\pi$-commutation.} of the non-parameterised rules of Figure \ref{fig:basicrules}. However, the same is not quite true of the broader Clifford+T gateset. Specifically, the $\pi$-commutation rule cannot be fully parameterised while maintaining generality. Particular note should be made of the fact that the $\pi$-commutation rule has been divided into three distinct polar-parameterised rules. These represent the three distinct cases in which (from Figure \ref{fig:basicrules}) the mapping $\Psi\rightarrow\Psi-2\Psi\Phi$ is expressible as a mapping from one polar-parameterised phase to another. Notably, parameterising this rule for T-like phases is not possible without breaking the parametric symmetry. This is outlined in Lemma \ref{lemma:no-T-pi-com}. As a result, constant (i.e. non-parameterised) Clifford+T ZX-diagrams are able to undergo further simplification than their equivalent polar-parameterised ZX-diagrams. This means the latter results in more stabiliser terms than the former. However, as will be seen, the difference in practice is generally very small.

\begin{lemma}
    \label{lemma:no-T-pi-com}
    Commuting a polarised phase $\Phi$, where $\image(\Phi)=\{0,\pi\}$, through a T-like spider (parameterised or otherwise) breaks parametric symmetry.
\end{lemma}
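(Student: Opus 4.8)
\emph{Proof strategy.} The plan is to track the effect of the $\pi$-commutation move on the phase of the T-like spider and show that the resulting phase cannot itself be a polarised phase, so that the family of parameter instantiations ceases to be parametrically symmetric. First I would fix notation: write the polarised phase being pushed as $\Phi = p\pi$, where $p = \mathfrak{p}_{i_1}\oplus\cdots\oplus\mathfrak{p}_{i_r}$ (with $r\geq 1$) is a boolean expression in the free parameters, so that $\image(\Phi)=\{0,\pi\}$; and write the phase of the T-like spider it is commuted through as $\Psi = q\pi + \tfrac{k\pi}{4}$, with $k$ an odd integer and $q$ either the constant $0$ (the spider carries an ordinary T-like phase) or itself a boolean expression in the parameters (the spider is polar-parameterised). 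In either case $\image(\Psi)\subseteq\{\tfrac{k\pi}{4},\,\tfrac{k\pi}{4}+\pi\}$.

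Next I would apply the (hypothetical) commutation rule in the form of Figure~\ref{fig:basicrules}: pushing $\Phi$ through the opposite-coloured spider copies a spider of phase $\Phi$ onto each of its remaining legs --- these copies are again polarised with image $\subseteq\{0,\pi\}$ and so are harmless --- and replaces the central phase by $\Psi\mapsto\Psi-2p\Psi=(1-2p)\Psi$. Reducing modulo $2\pi$ and using $-\pi\equiv\pi$, the term $(1-2p)q\pi$ equals $q\pi$ for both values of $p$, so letting $p$ (and $q$) range over $\{0,1\}$ gives
\[
\image\!\big((1-2p)\Psi\big)\;=\;\Big\{\tfrac{k\pi}{4},\ \tfrac{k\pi}{4}+\pi,\ -\tfrac{k\pi}{4},\ -\tfrac{k\pi}{4}+\pi\Big\}\pmod{2\pi}.
\]
In particular this image contains both $\tfrac{k\pi}{4}$ and $-\tfrac{k\pi}{4}$, which differ by $\tfrac{k\pi}{2}\equiv\pm\tfrac{\pi}{2}\pmod{2\pi}$ since $k$ is odd --- a value that is neither $0$ nor $\pi$. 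As every set $\{\gamma,\gamma+\pi\}$ has all of its elements pairwise congruent modulo $\pi$, no such $\gamma$ can contain this image, and hence $(1-2p)\Psi$ is not expressible in the polarised form of Equation~\ref{eqn:param-phase}.

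Finally I would translate this back into the language of parametric symmetry. Instantiate every free parameter except those occurring in $p$, and then compare the instantiation $p\mapsto 0$ against $p\mapsto 1$ (possible precisely because $\image(\Phi)=\{0,\pi\}$ forces $p$ to be non-constant). The spider in question then carries phase $q\pi+\tfrac{k\pi}{4}$ in the first diagram and $q\pi-\tfrac{k\pi}{4}$ in the second, differing by $\pm\tfrac{\pi}{2}$, while the two diagrams are otherwise identical up to the allowed $\pm\pi$ flips on the copy spiders; thus the resulting family is not structurally identical-up-to-$\pm\pi$, i.e.\ it is not parametrically symmetric. The one step requiring care --- and the main (if modest) obstacle --- is the clause ``parameterised or otherwise'': one must verify that the polarising term $q\pi$ of an already polar-parameterised T-like spider cannot conspire with the $\pm\tfrac{k\pi}{4}$ to restore a polarised image, which the displayed computation rules out because $q\pi$ is invariant under $p$ modulo $2\pi$. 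Everything else follows immediately from the local form of the $\pi$-commutation rule.
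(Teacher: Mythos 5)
Your proposal is correct and follows essentially the same route as the paper: both apply the $\pi$-commutation mapping $\Psi\mapsto\Psi-2\Psi\Phi$ to a T-like phase and observe that the two resulting values differ by $\pm\tfrac{\pi}{2}$ rather than $\pi$, so the image is not of the form $\{\alpha,\alpha+\pi\}$ and the phase is no longer polarised. Your write-up is somewhat more thorough than the paper's (which only treats the representative case $\Psi=\tfrac{\pi}{4}$), since you handle an arbitrary odd multiple $k\tfrac{\pi}{4}$, an optionally parameterised spider, and the explicit translation back to the definition of parametric symmetry, but the key computation is identical.
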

\begin{proof}
    $\pi$-commutation with a polar-parameterised phase $\Phi$, where $\image(\Phi)=\{0,\pi\}$, performs the following mapping to a separate phase: $\Psi\rightarrow\Psi-2\Psi\Phi$. If $\Psi$ is T-like (whether fixed or polar-parameterised), such as $\Psi=\frac{\pi}{4}$, then this mapping produces a phase of $\frac{\pi}{4}-\frac{\pi}{2}\Phi$. With a fractional coefficient to the polarised parameter, this phase is no longer polarised:
    \begin{equation*}
        \image\left(\frac{\pi}{4}-\frac{\pi}{2}\Phi\right)=\left\{\frac{\pi}{4},-\frac{\pi}{4}\right\}\neq\left\{\alpha,\alpha+\pi\right\}
    \end{equation*}
    for any $\alpha\in\mathbb{R}$. Hence, this gives rise to phases which are parameterised but not polarised, which, by definition, breaks the parametric symmetry. Lemma \ref{lemma:no-ambig-rule} in Appendix \ref{app:no-ambig-rule} illustrates the problems that would arise if such phases were permitted.
\end{proof}

\subsection{Parametric scalar expressions}

Reducing a polar-parameterised Clifford+T ZX-diagram via the rules of Figure \ref{fig:zx-rules-param} and stabiliser decomposition results in a parametric scalar expression, $S$, of the form:
\begin{equation}
    \label{eqn:sumprod}
    S=\sum\limits_{i=1}^{m} \Bigg[ C_i\prod\limits_{j=1}^{n_i} S_{ij} \Bigg]
\end{equation}

Here, $m$ is the number of stabiliser terms, $m\leq2^{\alpha t}$, given an initial T-count $t$. $C_i\in\mathbb{C}$ is then a global constant factor of term $i$ and $n_{ij}$ the number of parametric \textit{subterms} within term $i$. Lastly, $S_{ij}$ denotes the $j$\textsuperscript{th} subterm within the $i$\textsuperscript{th} term.

It can be shown, as in Appendix \ref{app:subterm-equiv}, that every subterm may be expressed in the form:

\begin{equation}
    \label{eqn:param-scal-expr}
    S_{ij}\;\;=\;\;1+e^{i\Psi_{ij}}+e^{i\Phi_{ij}}-e^{i(\Psi_{ij}+\Phi_{ij})}\;\;=\;\;\frac{1}{\sqrt{2}}\;\tikzfig{spiderpair} \;\quad\forall i,j
\end{equation}

where:

\begin{equation}
    \begin{aligned}
        \Psi_{ij}&=\alpha_{ij}+\pi\bigoplus\limits_{\mathfrak{p}\in P_{ij}^{\psi}}\mathfrak{p}\\
        \Phi_{ij}&=\beta_{ij}+\pi\bigoplus\limits_{\mathfrak{p}\in P_{ij}^{\phi}}\mathfrak{p}
    \end{aligned}
\end{equation}

with $\alpha_{ij},\beta_{ij}\in\mathbb{R}\;\forall i,j$ and:

\begin{equation}
    P_{ij}^{\psi},P_{ij}^{\phi}\subseteq\{\mathfrak{p}_1,\mathfrak{p}_2,\ldots,\mathfrak{p}_n\} \quad \forall i,j
\end{equation}

Here, $\{\mathfrak{p}_1,\mathfrak{p}_2,\ldots,\mathfrak{p}_n\}$ denotes the full set of uninstantiated boolean parameters from the initial parameterised ZX-diagram.

To summarise so far, any parametrically symmetric set of ZX-diagrams may be expressed as a single polar-parameterised ZX-diagram and reduced to a parameterised scalar expression. This expression may then be evaluated for any particular bitstring input, $\{\mathfrak{p}_1,\mathfrak{p}_2,\ldots,\mathfrak{p}_n\}\rightarrow\{p_1,p_2,\ldots,p_n\}$, where $p_i\in\mathbb{B}\;\forall i$, to deduce the scalar equivalent to having reduced the corresponding ZX-diagram. As a result, this method involves just a single instance of ZX-calculus reduction to scalar via stabiliser decomposition, followed by $2^n$ \textit{evaluations} of the resulting parametric scalar to deduce all possible outcomes. This is as opposed to the traditional method, which would instead require $2^n$ full (and potentially very slow) reductions to scalar of independent Clifford+T ZX-diagrams.

In practice, little (if any) speedup is achieved at this stage, since a single evaluation of the (potentially very large) parameterised scalar tends to be approximately as slow as simply reducing the corresponding ZX-diagram to scalar via stabiliser decomposition. However, the benefit lies in the fact that the parametric scalar expression always adheres to a strict consistent format and requires only very simple arithmetic to evaluate. As a result, it lends itself well to GPU-parallelism.

\subsection{GPU-parallelised evaluation}

To prepare it for the GPU, the parametric scalar attained from reducing a polar-parametric ZX-diagram may be expressed in a very primitive data structure. Specifically, each subterm, $S_{ij}$, may be recorded as two bitstrings to denote which parameters among the global set are included within its $P_{ij}^{\psi}$ and $P_{ij}^{\phi}$, together with two integers $a,b$ in the range $a,b\in[0,7]$ to denote $\alpha_{ij}=a\frac{\pi}{4}$ and $\beta_{ij}=b\frac{\pi}{4}$. Each subterm may thus be expressed as a successive row with the following headers, where, for instance, $\mathfrak{p}_l^{\psi}=1$ if $\mathfrak{p}_l\in P_{ij}^{\psi}$ and $\mathfrak{p}_l^{\psi}=1$ otherwise:

\begin{table}[!h]
    \centering
    \begin{tabular}{>{\centering\arraybackslash}p{0.75cm} | >{\centering\arraybackslash}p{0.75cm} >{\centering\arraybackslash}p{0.75cm} >{\centering\arraybackslash}p{0.75cm} >{\centering\arraybackslash}p{0.75cm} >{\centering\arraybackslash}p{0.75cm} | >{\centering\arraybackslash}p{0.75cm} >{\centering\arraybackslash}p{0.75cm} >{\centering\arraybackslash}p{0.75cm} >{\centering\arraybackslash}p{0.75cm} >{\centering\arraybackslash}p{0.75cm} }
    \hline
        \textbf{$*$} & 
        \textbf{$4\alpha/\pi$} & 
        \textbf{$\mathfrak{p}_1^{\psi}$} & 
        \textbf{$\mathfrak{p}_2^{\psi}$} & 
        \textbf{$\mathfrak{p}_3^{\psi}$} & 
        \textbf{$\cdots$} &
        \textbf{$4\beta/\pi$} & 
        \textbf{$\mathfrak{p}_1^{\phi}$} & 
        \textbf{$\mathfrak{p}_2^{\phi}$} & 
        \textbf{$\mathfrak{p}_3^{\phi}$} & 
        \textbf{$\cdots$} \\ \hline
    \end{tabular}
\end{table}

To optimise the indexing speed, it helps to artificially enforce each term, $i$, to share a consistent number, $n_i$, of subterms: $n_i=\max\limits_i(n_i)$. This can be achieved simply by padding `dummy' subterms into each term, with an additional bit, `*', to record whether each subterm is genuine (1) or a dummy (0). Additionally, the data should be stored in column-major order for a more efficient data access pattern, as detailed in Appendix \ref{app:coalescing}.

With the data now prepared into a primitive data structure, it may be evaluated with the aid of GPU parallelism. For any particular bitstring, $\{\mathfrak{p}_1,\mathfrak{p}_2,\ldots,\mathfrak{p}_n\}\rightarrow\{p_1,p_2,\ldots,p_n\}$ where $p_i\in\mathbb{B}\;\forall i$, each subterm may then be evaluated in parallel on the GPU with only simple arithmetic, consistent across all threads. The process of evaluating each subterm is as follows:

\begin{enumerate}
    \item If the row is marked as a dummy (i.e. the `$*$' column is $0$) then skip to step \ref{enum:paramzx:dummyStep}. In such cases, the row can essentially be ignored and its subterm result set immediately to $1$ without needing to compute any of the following calculations or steps. (This means the processing of dummy rows essentially amounts to simply doing nothing and waiting for the non-dummy threads to be processed.)
    
    \item\label{enum:paramzx:subStep} Substitute into each subterm expression (i.e. each row), in parallel, the chosen parameter values. This amounts to a simple bitwise multiplication (or AND operation):
    
    \begin{table}[!h]
        \centering
        \begin{tabular}{>{\centering\arraybackslash}p{0.75cm} | >{\centering\arraybackslash}p{0.75cm} >{\centering\arraybackslash}p{0.75cm} >{\centering\arraybackslash}p{0.75cm} >{\centering\arraybackslash}p{0.75cm} >{\centering\arraybackslash}p{0.75cm} | >{\centering\arraybackslash}p{0.75cm} >{\centering\arraybackslash}p{0.75cm} >{\centering\arraybackslash}p{0.75cm} >{\centering\arraybackslash}p{0.75cm} >{\centering\arraybackslash}p{0.75cm} }
        \hline
            \textbf{$1$} & 
            \textbf{$1$} & 
            \textbf{$p_1$} & 
            \textbf{$p_2$} & 
            \textbf{$p_3$} & 
            \textbf{$\cdots$} &
            \textbf{$1$} & 
            \textbf{$p_1$} & 
            \textbf{$p_2$} & 
            \textbf{$p_3$} & 
            \textbf{$\cdots$} \\ \hline
        \end{tabular}
    
        \vspace{0.25cm}
        $\times$
        \vspace{0.25cm}
        
        \begin{tabular}{>{\centering\arraybackslash}p{0.75cm} | >{\centering\arraybackslash}p{0.75cm} >{\centering\arraybackslash}p{0.75cm} >{\centering\arraybackslash}p{0.75cm} >{\centering\arraybackslash}p{0.75cm} >{\centering\arraybackslash}p{0.75cm} | >{\centering\arraybackslash}p{0.75cm} >{\centering\arraybackslash}p{0.75cm} >{\centering\arraybackslash}p{0.75cm} >{\centering\arraybackslash}p{0.75cm} >{\centering\arraybackslash}p{0.75cm} }
        \hline\hline
            \textbf{$*$} & 
            \textbf{$4\alpha/\pi$} & 
            \textbf{$\mathfrak{p}_1^\psi$} & 
            \textbf{$\mathfrak{p}_2^\psi$} & 
            \textbf{$\mathfrak{p}_3^\psi$} & 
            \textbf{$\cdots$} &
            \textbf{$4\beta/\pi$} & 
            \textbf{$\mathfrak{p}_1^\phi$} & 
            \textbf{$\mathfrak{p}_2^\phi$} & 
            \textbf{$\mathfrak{p}_3^\phi$} & 
            \textbf{$\cdots$} \\ \hline\hline
        \end{tabular}

        \vspace{0.25cm}
        $=$
        \vspace{0.25cm}
        
        \begin{tabular}{>{\centering\arraybackslash}p{0.75cm} | >{\centering\arraybackslash}p{0.75cm} >{\centering\arraybackslash}p{0.75cm} >{\centering\arraybackslash}p{0.75cm} >{\centering\arraybackslash}p{0.75cm} >{\centering\arraybackslash}p{0.75cm} | >{\centering\arraybackslash}p{0.75cm} >{\centering\arraybackslash}p{0.75cm} >{\centering\arraybackslash}p{0.75cm} >{\centering\arraybackslash}p{0.75cm} >{\centering\arraybackslash}p{0.75cm} }
        \hline\hline
            \textbf{$*$} & 
            \textbf{$4\alpha/\pi$} & 
            \textbf{$\mathfrak{p}_1^\psi p_1$} & 
            \textbf{$\mathfrak{p}_2^\psi p_2$} & 
            \textbf{$\mathfrak{p}_3^\psi p_3$} & 
            \textbf{$\cdots$} &
            \textbf{$4\beta/\pi$} & 
            \textbf{$\mathfrak{p}_1^\phi p_1$} & 
            \textbf{$\mathfrak{p}_2^\phi p_2$} & 
            \textbf{$\mathfrak{p}_3^\phi p_3$} & 
            \textbf{$\cdots$} \\ \hline\hline
        \end{tabular}
    \end{table}

    \item\label{enum:paramzx:xorStep} Calculate the XOR strings within $\Psi=(\mathfrak{p}_1^\psi p_1\oplus\mathfrak{p}_2^\psi p_2\oplus\ldots\oplus\mathfrak{p}_n^\psi p_n)\pi+\alpha$ and $\Phi=(\mathfrak{p}_1^\phi p_1\oplus\mathfrak{p}_2^\phi p_2\oplus\ldots\oplus\mathfrak{p}_n^\phi p_n)\pi+\beta$. That is, reduce $(\mathfrak{p}_1^\psi p_1\oplus\mathfrak{p}_2^\psi p_2\oplus\ldots\oplus\mathfrak{p}_n^\psi p_n)\rightarrow x$ and $(\mathfrak{p}_1^\phi p_1\oplus\mathfrak{p}_2^\phi p_2\oplus\ldots\oplus\mathfrak{p}_n^\phi p_n)\rightarrow y$, where $x,y\in\mathbb{B}$. This can be computed for each row in parallel, relying only on basic arithmetic:

    \begin{table}[!h]
        \centering
        \begin{tabular}{>{\centering\arraybackslash}p{0.75cm} | >{\centering\arraybackslash}p{0.75cm} >{\centering\arraybackslash}p{0.75cm} >{\centering\arraybackslash}p{0.75cm} >{\centering\arraybackslash}p{0.75cm} >{\centering\arraybackslash}p{0.75cm} | >{\centering\arraybackslash}p{0.75cm} >{\centering\arraybackslash}p{0.75cm} >{\centering\arraybackslash}p{0.75cm} >{\centering\arraybackslash}p{0.75cm} >{\centering\arraybackslash}p{0.75cm} }
        \hline\hline
            \textbf{$*$} & 
            \textbf{$4\alpha/\pi$} & 
            \textbf{$\mathfrak{p}_1^\psi p_1$} & 
            \textbf{$\mathfrak{p}_2^\psi p_2$} & 
            \textbf{$\mathfrak{p}_3^\psi p_3$} & 
            \textbf{$\cdots$} &
            \textbf{$4\beta/\pi$} & 
            \textbf{$\mathfrak{p}_1^\phi p_1$} & 
            \textbf{$\mathfrak{p}_2^\phi p_2$} & 
            \textbf{$\mathfrak{p}_3^\phi p_3$} & 
            \textbf{$\cdots$} \\ \hline\hline
        \end{tabular}

        \vspace{0.25cm}
        $\downarrow$
        \vspace{0.25cm}
        
        \begin{tabular}{>{\centering\arraybackslash}p{0.75cm} | >{\centering\arraybackslash}p{0.75cm} >{\centering\arraybackslash}p{0.75cm} >{\centering\arraybackslash}p{0.75cm} >{\centering\arraybackslash}p{0.75cm} >{\centering\arraybackslash}p{0.75cm} | >{\centering\arraybackslash}p{0.75cm} >{\centering\arraybackslash}p{0.75cm} >{\centering\arraybackslash}p{0.75cm} >{\centering\arraybackslash}p{0.75cm} >{\centering\arraybackslash}p{0.75cm} }
        \hline\hline
            \textbf{$*$} & 
            \textbf{$4\alpha/\pi$} & 
            & 
            \textbf{$x$} & 
            & 
            &
            \textbf{$4\beta/\pi$} & 
            & 
            \textbf{$y$} & 
            & 
            \\ \hline\hline
        \end{tabular}
    \end{table}

    \item\label{enum:paramzx:addConst} Given $\Psi=x\pi+\alpha \mod 2\pi$ and $\Phi=y\pi+\beta \mod 2\pi$, calculate $\frac{4\Psi}{\pi}=4x+\frac{4\alpha}{\pi} \mod 8$ and $\frac{4\Phi}{\pi}=4y+\frac{4\beta}{\pi} \mod 8$, such that $\frac{4\Psi}{\pi},\frac{4\Phi}{\pi}\in\{0,1,2,\ldots,7\}$. This too can be calculated for each row in parallel, using only simple operations on integers:

    \begin{table}[!h]
        \centering
        \begin{tabular}{>{\centering\arraybackslash}p{0.75cm} | >{\centering\arraybackslash}p{0.75cm} >{\centering\arraybackslash}p{0.75cm} >{\centering\arraybackslash}p{0.75cm} >{\centering\arraybackslash}p{0.75cm} >{\centering\arraybackslash}p{0.75cm} | >{\centering\arraybackslash}p{0.75cm} >{\centering\arraybackslash}p{0.75cm} >{\centering\arraybackslash}p{0.75cm} >{\centering\arraybackslash}p{0.75cm} >{\centering\arraybackslash}p{0.75cm} }
        \hline\hline
            \textbf{$*$} & 
            \textbf{$4\alpha/\pi$} & 
            & 
            \textbf{$x$} & 
            & 
            &
            \textbf{$4\beta/\pi$} & 
            & 
            \textbf{$y$} & 
            & 
            \\ \hline\hline
        \end{tabular}

        \vspace{0.25cm}
        $\downarrow$
        \vspace{0.25cm}
        
        \begin{tabular}{>{\centering\arraybackslash}p{0.75cm} | >{\centering\arraybackslash}p{0.75cm} >{\centering\arraybackslash}p{0.75cm} >{\centering\arraybackslash}p{0.75cm} >{\centering\arraybackslash}p{0.75cm} >{\centering\arraybackslash}p{0.75cm} | >{\centering\arraybackslash}p{0.75cm} >{\centering\arraybackslash}p{0.75cm} >{\centering\arraybackslash}p{0.75cm} >{\centering\arraybackslash}p{0.75cm} >{\centering\arraybackslash}p{0.75cm} }
        \hline\hline
            \textbf{$*$} & 
            & 
            & 
            \textbf{$4\Psi/\pi$} & 
            & 
            &
            & 
            & 
            \textbf{$4\Phi/\pi$} & 
            & 
            \\ \hline\hline
        \end{tabular}
    \end{table}

    \item\label{enum:paramzx:calcExp} The next step is to calculate $e^{i\Psi}$ and $e^{i\Phi}$. However, computing complex exponentials, such as these, is computationally costly. So, to avoid this, a lookup table may be used instead, mapping $4\Psi/\pi\in\{0,1,2,\ldots,7\}$ to $e^{i\Psi}\equiv A_\psi+B_\psi\sqrt{2}+i(C_\psi+D_\psi\sqrt{2})$, where this form is used to record the complex numbers as a set of simple fractional numbers.


    The same method may be used to compute $e^{i\Phi}\equiv A_\phi+B_\phi\sqrt{2}+i(C_\phi+D_\phi\sqrt{2})$ from $4\Phi/\pi$, and indeed to compute $e^{i(\Psi+\Phi)}\equiv A_{\psi+\phi}+B_{\psi+\phi}\sqrt{2}+i(C_{\psi+\phi}+D_{\psi+\phi}\sqrt{2})$ from $\frac{4\Psi}{\pi}+\frac{4\Phi}{\pi} \mod 8$. As ever, these calculations can be computed for each row in parallel:
    \newpage
    \begin{table}[!h]
        \centering
        \begin{tabular}{>{\centering\arraybackslash}p{0.75cm} | >{\centering\arraybackslash}p{0.75cm} >{\centering\arraybackslash}p{0.75cm} >{\centering\arraybackslash}p{0.75cm} >{\centering\arraybackslash}p{0.75cm} >{\centering\arraybackslash}p{0.75cm} | >{\centering\arraybackslash}p{0.75cm} >{\centering\arraybackslash}p{0.75cm} >{\centering\arraybackslash}p{0.75cm} >{\centering\arraybackslash}p{0.75cm} >{\centering\arraybackslash}p{0.75cm} }
        \hline\hline
            \textbf{$*$} & 
            & 
            & 
            \textbf{$4\Psi/\pi$} & 
            & 
            &
            & 
            & 
            \textbf{$4\Phi/\pi$} & 
            & 
            \\ \hline\hline
        \end{tabular}

        \vspace{0.25cm}
        $\downarrow$
        \vspace{0.25cm}
        
        \begin{tabular}{>{\centering\arraybackslash}p{0.5cm} | >{\centering\arraybackslash}p{0.5cm} >{\centering\arraybackslash}p{0.5cm} >{\centering\arraybackslash}p{0.5cm} >{\centering\arraybackslash}p{0.5cm} | >{\centering\arraybackslash}p{0.5cm} >{\centering\arraybackslash}p{0.5cm} >{\centering\arraybackslash}p{0.5cm} >{\centering\arraybackslash}p{0.5cm} | >{\centering\arraybackslash}p{0.75cm} >{\centering\arraybackslash}p{0.75cm} >{\centering\arraybackslash}p{0.75cm} >{\centering\arraybackslash}p{0.75cm} }
        \hline\hline
            \textbf{$*$} & 
            \textbf{$A_\psi$} & 
            \textbf{$B_\psi$} & 
            \textbf{$C_\psi$} & 
            \textbf{$D_\psi$} & 
            \textbf{$A_\phi$} &
            \textbf{$B_\phi$} & 
            \textbf{$C_\phi$} & 
            \textbf{$D_\phi$} &
            \textbf{$A_{\psi+\phi}$} &
            \textbf{$B_{\psi+\phi}$} & 
            \textbf{$C_{\psi+\phi}$} & 
            \textbf{$D_{\psi+\phi}$}
            \\ \hline\hline
        \end{tabular}
    \end{table}

    \item\label{enum:paramzx:calcSubterm} Lastly, having computed $e^{i\Psi}$, $e^{i\Phi}$, and $e^{i(\Psi+\Phi)}$ for each row, $ij$, each final subterm result may be deduced in parallel by calculating $s_{ij}=1+e^{i\Psi}+e^{i\Phi}-e^{i(\Psi+\Phi)}$. The result, in each case, may be stored as four simple fractional numbers via the form $s_{ij}\equiv A+B\sqrt{2}+i(C+D\sqrt{2})$. Note that adding two such complex numbers, $e^{i\Psi}+e^{i\Phi}$, in this form amounts to adding the like terms, as outlined in Appendix \ref{app:para-sum-alg}.

    Now, each row will store four simple numbers to record its subterm scalar:
    \begin{table}[!h]
        \centering
        \begin{tabular}{>{\centering\arraybackslash}p{0.5cm} | >{\centering\arraybackslash}p{0.5cm} >{\centering\arraybackslash}p{0.5cm} >{\centering\arraybackslash}p{0.5cm} >{\centering\arraybackslash}p{0.5cm} | >{\centering\arraybackslash}p{0.5cm} >{\centering\arraybackslash}p{0.5cm} >{\centering\arraybackslash}p{0.5cm} >{\centering\arraybackslash}p{0.5cm} | >{\centering\arraybackslash}p{0.75cm} >{\centering\arraybackslash}p{0.75cm} >{\centering\arraybackslash}p{0.75cm} >{\centering\arraybackslash}p{0.75cm} }
        \hline\hline
            \textbf{$*$} & 
            \textbf{$A_\psi$} & 
            \textbf{$B_\psi$} & 
            \textbf{$C_\psi$} & 
            \textbf{$D_\psi$} & 
            \textbf{$A_\phi$} &
            \textbf{$B_\phi$} & 
            \textbf{$C_\phi$} & 
            \textbf{$D_\phi$} &
            \textbf{$A_{\psi+\phi}$} &
            \textbf{$B_{\psi+\phi}$} & 
            \textbf{$C_{\psi+\phi}$} & 
            \textbf{$D_{\psi+\phi}$}
            \\ \hline\hline
        \end{tabular}

        \vspace{0.25cm}
        $\downarrow$
        \vspace{0.25cm}
        
        \begin{tabular}{>{\centering\arraybackslash}p{0.75cm} | >{\centering\arraybackslash}p{2.25cm} >{\centering\arraybackslash}p{2.25cm} >{\centering\arraybackslash}p{2.25cm} >{\centering\arraybackslash}p{2.25cm} }
        \hline\hline
            \textbf{$*$} & 
            \textbf{$A_{1,\psi,\phi,-(\psi+\phi)}$} & 
            \textbf{$B_{1,\psi,\phi,-(\psi+\phi)}$} & 
            \textbf{$C_{1,\psi,\phi,-(\psi+\phi)}$} & 
            \textbf{$D_{1,\psi,\phi,-(\psi+\phi)}$}
            \\ \hline\hline
        \end{tabular}
    \end{table}

    \item\label{enum:paramzx:dummyStep} Hiding the subscript labels here for brevity, the culmination of these steps is a matrix wherein each row now records four simple numbers which collectively denote a single subterm value:
    \begin{table}[!h]
        \centering
        \begin{tabular}{>{\centering\arraybackslash}p{2.25cm} >{\centering\arraybackslash}p{2.25cm} >{\centering\arraybackslash}p{2.25cm} >{\centering\arraybackslash}p{2.25cm} }
        \hline\hline
            \textbf{$A$} & 
            \textbf{$B$} & 
            \textbf{$C$} & 
            \textbf{$D$}
            \\ \hline\hline
        \end{tabular}
    \end{table}
    
    If the subterm was flagged as a dummy then one may jump straight to this step with $A=1,\;B=0,\;C=0,\;D=0$.
\end{enumerate}

Once every subterm has been fully reduced, it remains to multiply together all the subterms within each term, and then to sum together these results, as per Equation \ref{eqn:sumprod}. These two steps may make further use of GPU parallelism, utilising the algorithm outlined in Appendix \ref{app:para-sum-alg}. This whole approach may be used for repeated strong simulation or applied directly to the left-hand side of Figure \ref{fig:classicsim} for computing marginal probabilities. It may also be applied to repeated weak simulation as per Appendix \ref{app:repeated-weak}.

\section{Results}

An illustrative summary of this new parametric and GPU-parallelised method, as compared to the conventional approach of \cite{DBLP:journals/quant-ph/Kissinger21}, is shown in Figure \ref{fig:flows}.

\begin{figure}[h!]
    \begin{subfigure}[b]{0.48\textwidth} 
        \centering
        \includegraphics[scale=1.125]{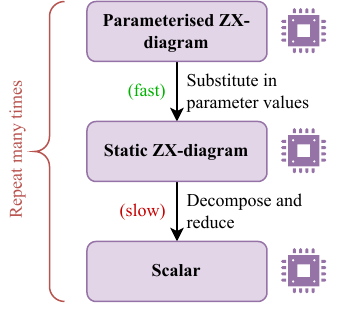}
        \captionsetup{width=0.9\textwidth}
        \caption{The conventional CPU-based approach, introduced in \cite{DBLP:journals/quant-ph/Kissinger21}.}
    \end{subfigure}
    \begin{subfigure}[b]{0.48\textwidth}
        \centering
        \includegraphics[scale=1.125]{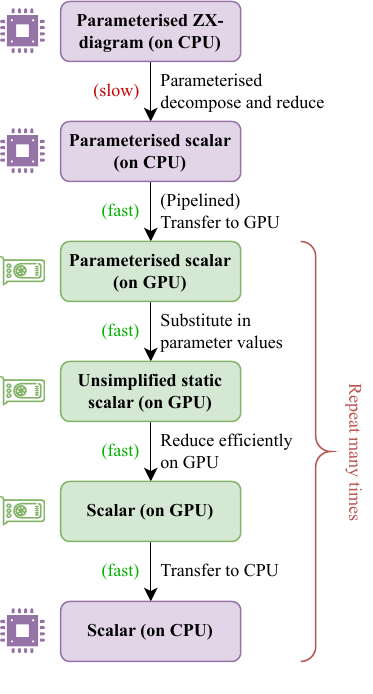}
        \captionsetup{width=0.9\textwidth}
        \caption{The parallelised GPU-based approach outlined in this chapter.}
    \end{subfigure}
    \caption{A comparison of the two procedures for repeated evaluation of a parameterised ZX-diagram for various sets of parameter values. The boxes show the state of the data at each step, connected by arrows indicating the processes that update these data, together with a qualitative note of the speed of each process. Purple boxes represent data being on the CPU (\raisebox{-0.2ex}{\includegraphics[height=1.8ex]{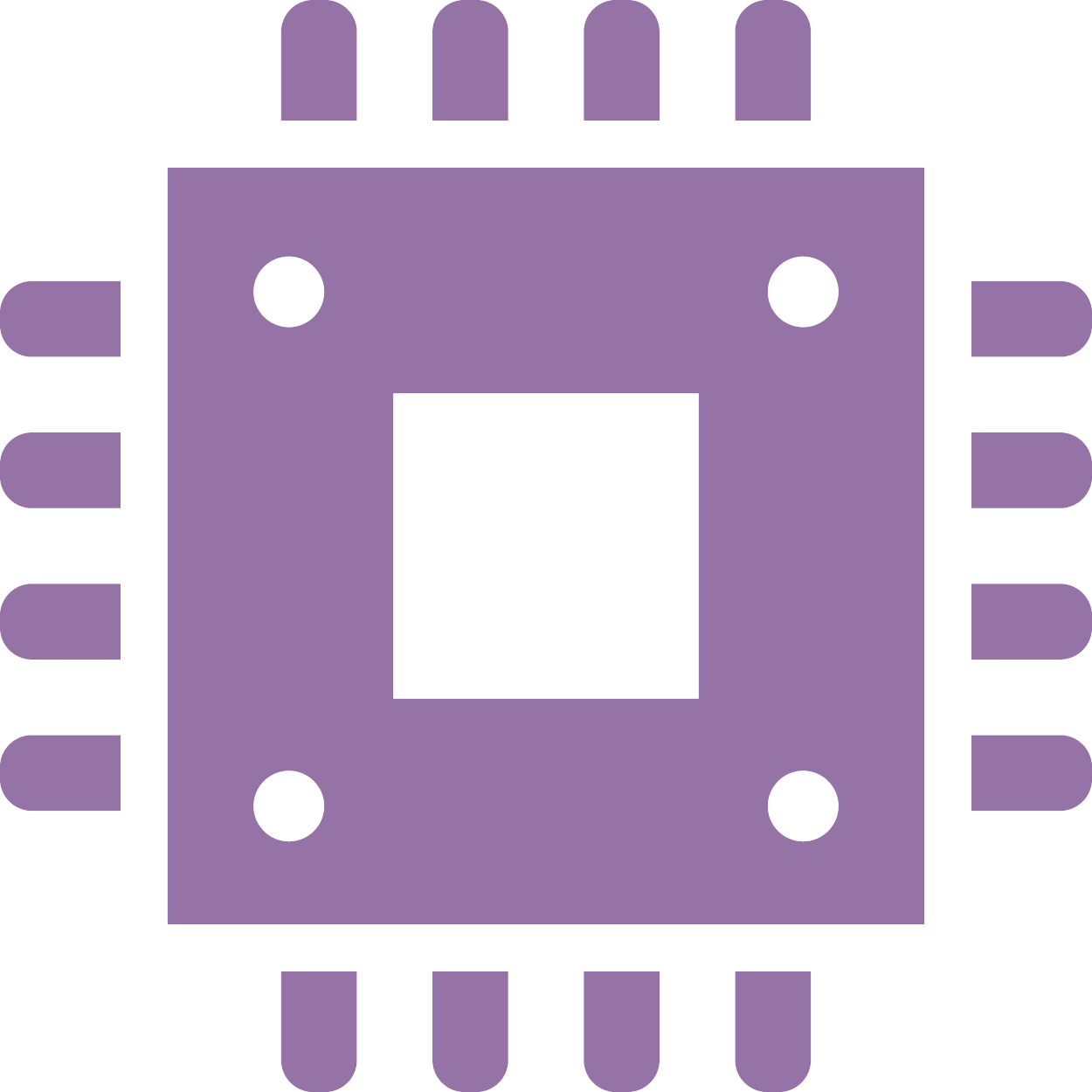}}) and green boxes data on the GPU (\raisebox{-0.2ex}{\includegraphics[height=1.8ex]{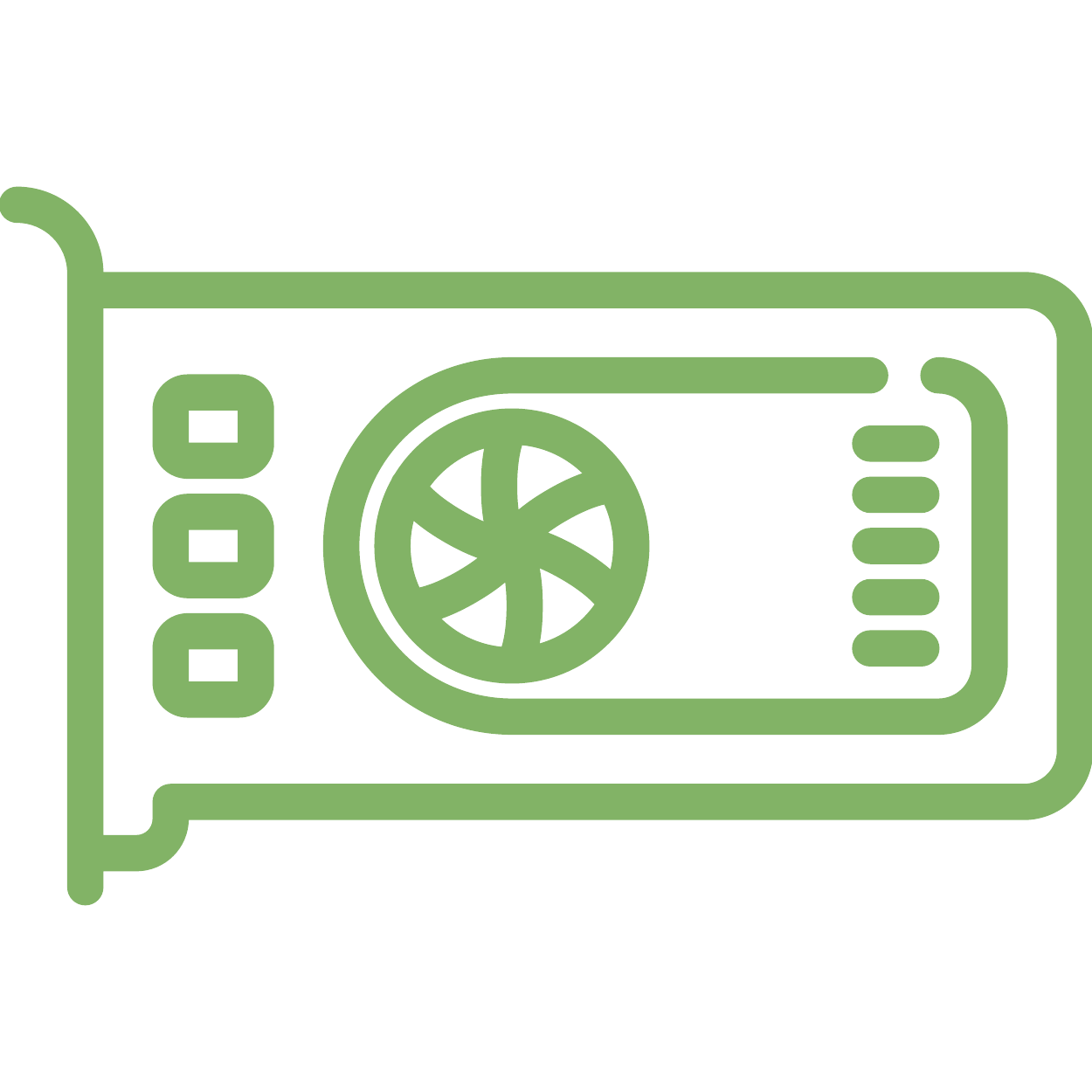}}).}
    \label{fig:flows}
\end{figure}

To measure the effectiveness of this approach, it was benchmarked against the conventional method of Kissinger and van de Wetering \cite{DBLP:journals/quant-ph/Kissinger21}, for strongly simulating randomly generated Clifford+T circuits of various T-counts and parameter counts. Specifically, we compare against the Quizx \cite{github:quizx} implementation of the conventional method, which is a Rust-based port of PyZX \cite{github:pyzx,kissinger2020Pyzx}, designed for speed and performance. Note that all experiments were run on relatively modest commercial hardware with the following specs: \textit{6-core 2.69GHz Intel i5-11400H CPU and an 8GB NVIDIA GeForce GTX 1650 GPU, plus 8GB SODIMM RAM}.

For each circuit, we measure the time taken by both methods to strongly simulate various instances (i.e. evaluate its scalar for particular parameter bitstrings). The speedup factor of the new method versus the old is then given by the ratio of these two runtimes. The speedup factor given a particular number $N$ of evaluations is labelled $S_N$.

The new method carries an initial overhead runtime to reduce the original parameterised ZX-diagram to a GPU-ready parameterised scalar, though every subsequent evaluation may be computed very rapidly. Consequently, as an increasing number of evaluations $N$ are taken, the initial overhead time becomes increasingly negligible and hence the speedup factor increases. However, as the number of parallel threads available on the GPU is finite, each experiment has a maximum `\textit{terminal}' speedup factor, which is approached asymptotically as $N\rightarrow\infty$.

Both theoretically and experimentally, the speedup factor $S_N$ thus follows a sigmoid curve (as pictured in Appendix \ref{app:sigmoid}) versus the number of evaluations $N$ when plotted on a log-linear scale. With this in mind, this sigmoid curve for any particular circuit may be described by two metrics, namely:
\begin{itemize}
    \item the \textbf{terminal speedup factor}, $S_\infty$, which measures the theoretical maximum runtime improvement for the given parameterised ZX-diagram, and
    \item the \textbf{inflection point}, which measures how many evaluations are required for the speedup factor to reach half of its theoretical maximum, and which effectively quantifies the rate at which the speedup factor increases against the number of evaluations taken.
\end{itemize}

\begin{figure}[h!]
    \centering
    \begin{subfigure}[t]{0.49\textwidth}
        \centering
        \includegraphics[scale=0.5]{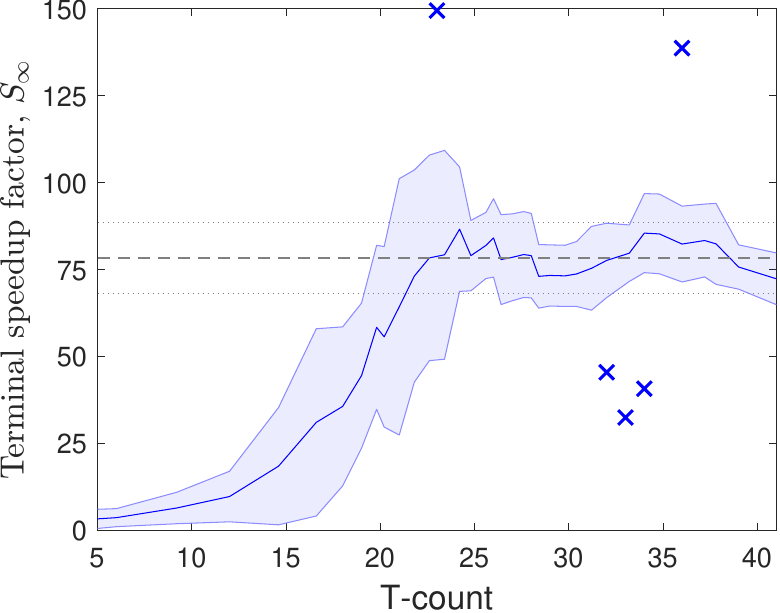}
        \captionsetup{width=0.9\textwidth}
        \caption{The terminal speedup factors, $S_\infty$, versus Quizx, plotted against T-count.}
        \label{fig:meas-tspeedups}
    \end{subfigure}%
    ~ 
    \begin{subfigure}[t]{0.49\textwidth}
        \centering
        \includegraphics[scale=0.5]{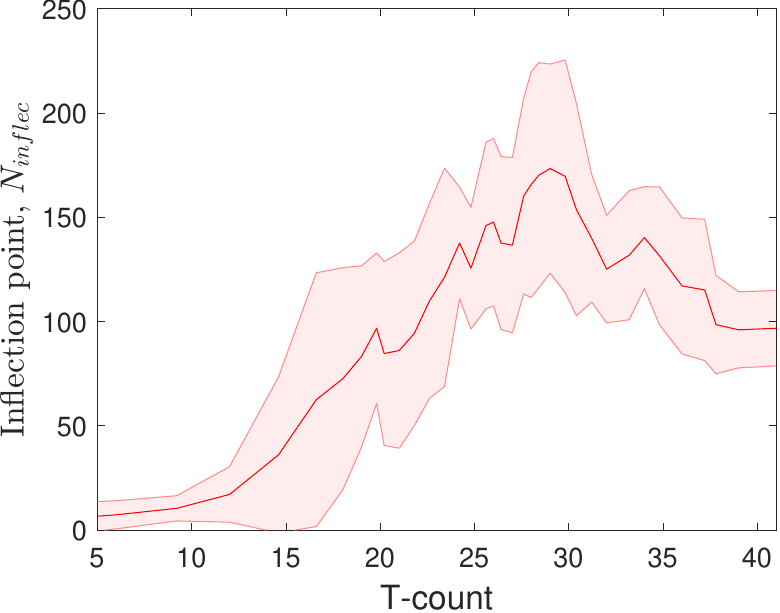}
        \captionsetup{width=0.9\textwidth}
        \caption{The inflection points, $N_{inflec}$, plotted against T-count.}
        \label{fig:meas-inflections}
    \end{subfigure}
    \caption{The measured (a) terminal speedup factors and (b) inflection points of various randomly generated parameterised ZX-diagrams versus the conventional Quizx implementation \protect\cite{DBLP:journals/quant-ph/Kissinger21}, plotted against the initial T-counts (after Clifford simplification). These plots show a ($5$-point window) moving average of the measured results, with error margins given by the standard deviation. Extreme outlier measurements have been excluded and indicated with an `x', and Figure (a) includes a dashed grey line indicating the average $S_\infty$ (with standard deviation error margins) across all (non-outlier) measurements of T-counts $\geq25$, where $S_\infty$ appears to plateau.}
    \label{fig:speedups}
\end{figure}

Figure \ref{fig:speedups} shows how these metrics varied against the initial (post Clifford simplification) T-counts of the various circuits. Evidently, for trivially small T-counts, little improvement is observed. This is unsurprising as such circuits decompose into very few stabiliser terms --- too few to take full advantage of GPU parallelism (especially if the number of terms is fewer than the number of available parallel threads).

Nevertheless, beyond the trivially small cases the measured terminal speedup factors appears to plateau, indicating that the full GPU capacity is being utilised. In this region (T-counts $\geq25$), we observe an average terminal speedup factor of $S_\infty=78.3\pm10.2$, with rare but extreme outliers as high as $149.5$ and as low as $32.4$. The lower outliers are believed to represent particularly unlucky ZX-diagrams for which the problem described in Lemma \ref{lemma:no-T-pi-com} was more prominent. In other words, in very unfortunate (though rare) cases, non-parametric Clifford+T ZX-diagrams were able to undergo much more simplification and produce significantly fewer stabiliser terms than their parametric counterparts. The very high outliers, meanwhile, are likely due to cases which simplified unusually neatly, producing very few subterms per term, easing the computational cost.

Evidently, therefore, this method is generally very effective at speeding up the problem of strong simulation by utilising GPU hardware, particularly when a larger number of evaluations (i.e. simulating more elements in a parametrically symmetric set of ZX-diagrams), as is very common in weak simulation tasks, where the number of evaluations is essentially exponential against the qubit count. From the inflection point measurements of Figure \ref{fig:meas-inflections}, it is clear that hundreds of measurements is sufficient to reach at least half of the full potential speedup in each case, with thousands then being sufficient to surpass $90\%$ of the terminal speedup factor. This means in practical situations speedup factors very near the $S_\infty=78.3\pm10.2$ mark are indeed viable.

All the relevant code is available at \url{https://github.com/mjsutcliffe99/ParamZX} \cite{github:paramzx}.

\section{Conclusions and Future Directions}

There are many applications in ZX-calculus where it is useful to compute (and often sum) many instances of the reduced scalar of a particular circuit, for various boolean input/output bitstrings. This paper highlights the redundancy inherent in such cases by demonstrating how the quantum circuit reduction strategies of ZX-calculus can be parameterised, with appropriate considerations, to allow circuits to be reduced while maintaining arbitrary boolean inputs/outputs. In effect this means that instead of reducing a given circuit $n$ times to compute its scalar for $n$ different input/output bitstrings, one can instead reduce it just once (while maintaining generality), and efficiently \textit{evaluate} the resulting parameterised expression $n$ times using GPU hardware. Ultimately, it was shown that, applied to classical simulation, this led to an average speedup factor of $78\pm10.2$.

While not implemented within the scope of this paper, there are a number of small techniques that could be employed to further optimise the method outlined here. For instance, after a given circuit has been reduced to a parameterised scalar, there may often be many simplifications that could be applied to minimise the number of subterms involved. For example, node-type subterms that share the same set of parameters but whose constant terms differ by $\pi$ may be cancelled pairwise in place of a constant, such as $(1+e^{i\pi(\frac{2}{4}+(a\oplus b))})(1+e^{i\pi(\frac{6}{4}+(a\oplus b))})=(1+e^{i\pi\frac{2}{4}})(1+e^{i\pi\frac{6}{4}})=(1+i)(1-i)=2$. 
A special case of this is in cancelling such pairs whose constants are $0$ and $\pi$ respectively, as this reduces overall to $0$, hence reducing the entire scalar term to $0$, negating any need to compute it further. This would increase the initial overhead time, and by extension the inflection point, but would reduce the number of subterms to compute and hence improve the terminal speedup factor.

Furthermore, this paper primarily focused on applying the method outlined within to the use-case of classical simulation via the summing and/or doubling approaches. However, the method is a very general one, applicable to many related applications such as breaking simulation tasks into disjoint components by vertex cutting~\cite{Codsi2022Masters,sutcliffe-partitioning,sutcliffePhd} and performing weak simulation via the metropolis method described in \cite{bravyi19}.

\nocite{*}
\bibliographystyle{eptcs}
\bibliography{param-zx}

\appendix

\newpage
\section{Preserving parametric symmetry}
\label{app:no-ambig-rule}

\begin{lemma}
    \label{lemma:no-ambig-rule}
    Reducing a parametrically symmetric set of ZX-diagrams as a single polar-parameterised ZX-diagram requires every phase at every step to be either fixed or polarised.
\end{lemma}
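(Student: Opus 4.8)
The plan is to argue by contradiction. Assume a parametrically symmetric family $\{D_{\vec{p}}\}_{\vec{p}\in\mathbb{B}^n}$ is being reduced through a single parameterised diagram, and that at some intermediate step a spider acquires a phase $\theta$ that is parameterised but not polarised, i.e.\ $\image(\theta)\not\subseteq\{\alpha,\alpha+\pi\}$ for every $\alpha\in\mathbb{R}$. I would then show that from this point the reduction can no longer be carried out ``as a single diagram'' — it must branch on the value of $\vec{p}$ — contradicting the hypothesis and thereby establishing that every phase at every step must be fixed or polarised.

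First I would make precise what it means for one parameterised diagram $D'$ to represent the family at a given stage: for every bitstring $\vec{p}$, the instantiation $\mathfrak{p}_i\mapsto p_i$ in $D'$ must coincide with the diagram obtained by running the corresponding sequence of ordinary rewrites on $D_{\vec{p}}$. In particular the underlying graph of $D'$ must be independent of $\vec{p}$, and each phase of $D'$ must, ranging over all instantiations, either be constant or differ by exactly $\pm\pi$ between instantiations — which is precisely the definition of the family remaining parametrically symmetric. So the statement ``every phase is fixed or polarised'' is equivalent to ``$D'$ still represents a parametrically symmetric family'', and the work is to show the latter is forced.

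The core step is to show that the admissible parametric moves preserve the invariant ``every phase is fixed or polarised'', and that a non-polarised phase can only appear by leaving this set of moves. Spider fusion adds phases, and a sum $\bigl(\bigoplus_{k}\mathfrak{p}_k\bigr)\pi+\alpha$ of polarised phases is again polarised; colour change, Hadamard, and copy rules leave phases fixed or negate them, and $-\theta$ is polarised iff $\theta$ is; and the three polar-parameterised $\pi$-commutation rules of Figure~\ref{fig:zx-rules-param} are, by their very statement, exactly the cases of the map $\Psi\mapsto\Psi-2\Psi\Phi$ whose output is still of the form $\beta+(\bigoplus\mathfrak{p})\pi$. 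Hence along any legal parametric reduction the invariant holds, and the first appearance of a non-polarised phase must come from a situation in which the applicable ordinary rule genuinely depends on $\vec{p}$ — e.g.\ $\Psi\mapsto\Psi-2\Psi\Phi$ behaving as the identity for some instantiations and as negation for others, as in Lemma~\ref{lemma:no-T-pi-com}. I would then exhibit, from such a $\theta$, two instantiations $\vec{p}\neq\vec{q}$ whose continued non-parametric reductions force different graph structures, so that no single parameterised diagram can cover both; this is the contradiction.

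I expect the main obstacle to be the bookkeeping in the core step: either one enumerates the rules of Figures~\ref{fig:basicrules}--\ref{fig:zx-rules-param} (and the derived rules of Appendix~\ref{app:more-rules}) and checks case-by-case that each preserves polarisation or is simply unavailable when it would break it, or one finds a cleaner invariant — for instance, noting that a phase is ``fixed or polarised'' exactly when it is an affine $\mathbb{Z}$-combination of the $\mathfrak{p}_i$ with coefficients in $\pi\mathbb{Z}$ modulo $2\pi$, a set closed under all admissible operations but not under scaling by a non-integer (which is exactly what T-like $\pi$-commutation would require). The genuinely delicate point is turning ``the reduction cannot continue as a single diagram'' into a rigorous statement rather than a suggestive one; I would handle this by the divergence-of-instantiations argument above, making the failure of representability concrete.
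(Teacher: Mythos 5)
Your core mechanism is the same one the paper relies on: a non-polarised phase makes the applicability of a rewrite rule depend on the instantiation of the parameters, so different instantiations diverge and the family can no longer be carried as a single diagram. The paper, however, proves the lemma only ``by example'': it exhibits one diagram with a central phase $\mathfrak{p}\frac{\pi}{2}$, notes that $\image(\mathfrak{p}\frac{\pi}{2})=\{0,\frac{\pi}{2}\}\neq\{\alpha,\alpha+\pi\}$ for any $\alpha$, and observes that local complementation applies when $\mathfrak{p}\rightarrow 1$ but not when $\mathfrak{p}\rightarrow 0$, so the two branches diverge. Your proposal wraps that same divergence argument in two additional layers: a formalisation of what it means for one parameterised diagram to represent the family at an intermediate stage, and an invariant-preservation argument showing the legal parametric moves keep every phase fixed or polarised. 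The first layer is a genuine gain in rigour over the paper, which leaves ``no longer expressible as a single parameterised ZX-diagram'' informal. The second layer, while true (your observation that polarised phases are exactly those with coefficients in $\pi\mathbb{Z}$ modulo $2\pi$, a set closed under fusion and negation but not under the scaling that T-like $\pi$-commutation demands, is the clean way to see it), is not what this lemma asserts --- it is the converse concern, already handled by the design of Figure \ref{fig:zx-rules-param} and by Lemma \ref{lemma:no-T-pi-com} --- so you are doing more work than the statement requires. The step you defer (``exhibit two instantiations whose continued reductions force different graph structures'' starting from an \emph{arbitrary} non-polarised phase) is precisely where the paper retreats to a single worked example; establishing it in full generality would need an argument that the divergence cannot be dodged by simply declining to apply the ambiguous rule, which neither you nor the paper supplies.
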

\begin{proof}
    As a proof by example, consider the following parameterised ZX-diagram containing a non-polarised phase $\mathfrak{p}\frac{\pi}{2}$, where $\mathfrak{p}$ is an uninstantiated boolean parameter:
    \ctikzfig{illegal-lc}
    Here, the central parameterised phase is not polarised:
    \begin{equation*}
        \image\left(\mathfrak{p}\frac{\pi}{2}\right)=\left\{0,\frac{\pi}{2}\right\}\neq\left\{\alpha,\alpha+\pi\right\}
    \end{equation*}
    for any $\alpha\in\mathbb{R}$. Consequently, in this case it is ambiguous whether the local complementation rule applies, as it would do so when $\mathfrak{p}\rightarrow1$ but not when $\mathfrak{p}\rightarrow0$. In other words, the two possibilities of $\mathfrak{p}\rightarrow 0$ and $\mathfrak{p}\rightarrow 1$ would lead to two diverging branches, meaning the parametrically symmetric pair would no longer be expressible as a single parameterised ZX-diagram.
\end{proof}

\section{Additional polar-parameterised rewrite rules}
\label{app:more-rules}

In additional to the generalised rules given in Figure \ref{fig:zx-rules-param}, some higher-level rules (derivable from the basic set) may also be generalised for polar-parameterised ZX-diagrams. Specifically, these are parametric versions of the \textit{local complementation} and \textit{pivoting} rules \cite{DBLP:journals/quant-ph/Wetering20}:

\ctikzfig{deriv-rules-param}

The derivations of these parametric rules, including those of Figure \ref{fig:zx-rules-param}, follow from their non-parametric counterparts and are left as an exercise for the reader.

\section{Equivalence of subterm types}
\label{app:subterm-equiv}

From the parameterised rewriting rules and scalar relations presented in Figure \ref{fig:zx-rules-param} and Appendix \ref{app:more-rules}, one may observe that there are in fact four different types of parameterised subterms that may arise. These are summarised in table \ref{tab:paramzx:subtermtypes}.

\begin{table}[ht]
    \centering
    {
    \renewcommand{\arraystretch}{2}
    \begin{tabular}{>{\centering\arraybackslash}p{3cm} >{\centering\arraybackslash}p{5cm} >{\centering\arraybackslash}p{4cm}}
        \textbf{Name} & 
        \textbf{Form} & 
        \textbf{Origin(s)} \\ \hline\hline
        
        Node & 
        $(1+e^{i\Psi})$ & 
        \tikzfig{node}  \\ \hline
        
        Phase-pair & 
        $\left(1+e^{i\Psi} + e^{i\Phi} - e^{i(\Psi+\Phi)}\right)$ & 
        \tikzfig{spiderpair2}  \\ \hline
        
        Half-$\pi$ & 
        $e^{i\Psi/2}$ & 
        Appendix \ref{app:more-rules} \\ \hline
        
        $\pi$-pair & 
        $e^{i\Psi\Phi/\pi}$ & 
        Figure \ref{fig:zx-rules-param} \& Appendix \ref{app:more-rules}  \\ \hline
    \end{tabular}
    }
    \caption{The different types of subterms that may arise from reducing parameterised ZX-diagrams, where $\Psi$ and $\Phi$ are polar-parameterised phases obeying the form of Equation \ref{eqn:param-phase}.}
    \label{tab:paramzx:subtermtypes}
\end{table}

In fact, it can be shown that these may all reduce to a single unique subterm type, namely that labelled `\textit{phase-pair}'. Lemmas \ref{lemma:paramzx:node-equiv} to \ref{lemma:paramzx:halfpi-equiv} formalise this observation.

\begin{lemma}
    Node-type subterms can be reduced to phase-pair subterms:
    \begin{equation}
        \left(1+e^{i\Psi}\right)=C\left(1+e^{i\Psi'}+e^{i\Phi}-e^{i(\Psi' +\Phi)}\right)
    \end{equation}
    where $\Psi'=\Psi+\frac{\pi}{2}$ and $C=\frac{\sqrt{2}}{4}(1-i)$.
    \label{lemma:paramzx:node-equiv}
\end{lemma}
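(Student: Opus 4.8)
The plan is a direct algebraic verification, plus a check that the construction stays inside the polar-parameterised formalism. The key observation is that the ``free'' phase $\Phi$ appearing in the phase-pair on the right-hand side is not genuinely arbitrary here: for the asserted equality to hold for every $\Psi$ it is forced to be the fixed phase $\Phi=\tfrac{\pi}{2}$ (matching a standard degree of freedom already available, since a fixed phase is always polarised). With that choice one has $e^{i\Phi}=i$ and $e^{i(\Psi'+\Phi)}=ie^{i\Psi'}$, and the four-term phase-pair expression collapses to a scalar multiple of the two-term node expression. So the lemma really just says: the node subterm $(1+e^{i\Psi})$ can be realised as the phase-pair subterm obtained by taking $\Psi'=\Psi+\tfrac{\pi}{2}$, $\Phi=\tfrac{\pi}{2}$.

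Concretely, I would substitute $\Psi'=\Psi+\tfrac{\pi}{2}$ and $\Phi=\tfrac{\pi}{2}$ into $1+e^{i\Psi'}+e^{i\Phi}-e^{i(\Psi'+\Phi)}$, using $e^{i\Psi'}=e^{i\pi/2}e^{i\Psi}=ie^{i\Psi}$, to get $1+ie^{i\Psi}+i-i\cdot ie^{i\Psi}=(1+i)+(1+i)e^{i\Psi}=(1+i)\bigl(1+e^{i\Psi}\bigr)$. Multiplying through by $\tfrac{1}{1+i}=\tfrac{1-i}{2}$ then gives $(1+e^{i\Psi})=\tfrac{1-i}{2}\bigl(1+e^{i\Psi'}+e^{i\Phi}-e^{i(\Psi'+\Phi)}\bigr)$; carrying the $\tfrac{1}{\sqrt2}$ normalisation that relates a genuine phase-pair \emph{subterm} to the bare four-term expression (Equation~\ref{eqn:param-scal-expr}) through this computation produces the stated constant $C=\tfrac{\sqrt2}{4}(1-i)$. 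That is the whole computation.

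It then remains to confirm the output is a bona fide phase-pair subterm in the sense of the formalism, which is two trivial checks: $\Phi=\tfrac{\pi}{2}$ is a fixed phase, hence polarised with $\image(\Phi)=\{\tfrac{\pi}{2}\}$; and $\Psi'=\Psi+\tfrac{\pi}{2}$ inherits polar-parameterisation from $\Psi$, since adding a real constant to a phase of the form of Equation~\ref{eqn:param-phase} leaves the parameter set and the XOR structure untouched and only shifts the real offset $\alpha\mapsto\alpha+\tfrac{\pi}{2}$. Hence the rewrite $(1+e^{i\Psi})\mapsto$ phase-pair is well defined whether $\Psi$ is fixed or polar-parameterised. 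I do not anticipate any real obstacle: the mathematical content is a one-line identity in $e^{i\Psi}$; the only thing requiring care is bookkeeping the $\sqrt2$ factors consistently between the diagrammatic subterm normalisation and the raw expression, which is exactly what pins down the precise value of $C$.
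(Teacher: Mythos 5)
Your argument is correct, and it is worth noting that it takes a different (though equivalent) route from the paper: the paper's proof of Lemma~\ref{lemma:paramzx:node-equiv} is a single diagrammatic derivation, rewriting the legless spider that generates a node subterm into a connected spider pair and tracking the scalar factors with the ZX scalar rules, whereas you verify the identity by direct complex arithmetic after first pinning down the hidden degree of freedom $\Phi$. Your observation that $\Phi$ is \emph{forced} to equal $\tfrac{\pi}{2}$ (matching coefficients of $1$ and $e^{i\Psi}$ gives $e^{i\Phi}=i$ uniquely) is a genuine addition: the diagrammatic proof simply exhibits one witness, while you show it is the only one, and you make explicit the polarisation checks on $\Phi=\tfrac{\pi}{2}$ and $\Psi'=\Psi+\tfrac{\pi}{2}$ that the diagram leaves implicit. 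The one point to be careful about is the constant: your raw algebra correctly yields $\bigl(1+e^{i\Psi}\bigr)=\tfrac{1-i}{2}\bigl(1+e^{i\Psi'}+e^{i\Phi}-e^{i(\Psi'+\Phi)}\bigr)$, which differs from the stated $C=\tfrac{\sqrt{2}}{4}(1-i)$ by exactly a factor of $\sqrt{2}$. You attribute this to the normalisation in Equation~\ref{eqn:param-scal-expr} relating the bare four-term expression to the spider-pair diagram, which is the right diagnosis --- the paper's $C$ is the coefficient appearing in the diagrammatic rewrite --- but you should state explicitly that the lemma's displayed equation, read literally as an identity of complex numbers, holds with $\tfrac{1-i}{2}$ rather than $\tfrac{\sqrt{2}}{4}(1-i)$, and that the discrepancy is purely the diagram-versus-expression convention rather than anything in your computation.
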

\begin{proof}
    \begin{equation}
        \tikzfig{subterm-reduce-node}
    \end{equation}
\end{proof}

\begin{lemma}
    $\pi$-pair subterms can be reduced to phase-pair subterms:
    \begin{equation}
        e^{i\Psi\Phi/\pi} \rightarrow C\left(1+e^{i\Psi}+e^{i\Phi}-e^{i(\Psi +\Phi)}\right)
    \end{equation}
    where $C=\frac{1}{2}$.
    \label{lemma:paramzx:pipair-equiv}
\end{lemma}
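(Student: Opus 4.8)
The plan is to reduce the statement to a finite Boolean check. First I would observe that a $\pi$-pair subterm $e^{i\Psi\Phi/\pi}$ can only arise (from the polar-parameterised $\pi$-commutation rules of Figure \ref{fig:zx-rules-param} and Appendix \ref{app:more-rules}) when both $\Psi$ and $\Phi$ have image contained in $\{0,\pi\}$: the general polar form $\alpha + \pi(\bigoplus\mathfrak{p})$ of Equation \ref{eqn:param-phase} degenerates here, since if either phase carried a genuine real offset then $\Psi\Phi/\pi$ would take three distinct values and fail to be polarised. So I would write $\Psi = \pi u$ and $\Phi = \pi v$, where $u = \Psi/\pi$ and $v = \Phi/\pi$ are the associated Boolean values (themselves possibly XORs of the free parameters), and record $e^{i\Psi} = (-1)^u$, $e^{i\Phi} = (-1)^v$, $e^{i(\Psi+\Phi)} = (-1)^{u+v}$, together with $\Psi\Phi/\pi = \pi u v$, hence $e^{i\Psi\Phi/\pi} = (-1)^{uv}$.

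With this normalisation the claim becomes the arithmetic identity $1 + (-1)^u + (-1)^v - (-1)^{u+v} = 2(-1)^{uv}$ for $u,v\in\{0,1\}$. I would prove it either by exhausting the four cases (each side equals $2$ unless $u=v=1$, where each equals $-2$) or, more slickly, by substituting $(-1)^u = 1-2u$ and $(-1)^v = 1-2v$ (valid as $u,v\in\{0,1\}$), expanding $(-1)^{u+v} = (1-2u)(1-2v) = 1-2u-2v+4uv$, and simplifying the left-hand side to $2 - 4uv = 2(1-2uv) = 2(-1)^{uv}$. Dividing by $2$ yields $e^{i\Psi\Phi/\pi} = \tfrac12\bigl(1 + e^{i\Psi} + e^{i\Phi} - e^{i(\Psi+\Phi)}\bigr)$, i.e. exactly the asserted reduction with $C = \tfrac12$; since the identity holds for every assignment of $u$ and $v$, it holds identically in the free parameters.

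The only real subtlety — and the step I would be most careful about — is justifying the restriction $\image(\Psi),\image(\Phi)\subseteq\{0,\pi\}$ rather than the full polar form allowed by Equation \ref{eqn:param-phase}, since the identity is simply false otherwise (take $\Psi = \Phi = \pi/4$). I would argue this is automatic from how $\pi$-pairs are generated: they encode the logical \emph{and} $\pi\,(u\wedge v)$ of two Boolean phases, which is precisely the structure that cannot be written in the canonical XOR form of Equation \ref{eqn:param-phase} and so must be rewritten as a phase-pair — alternatively one can simply take $\image(\Psi),\image(\Phi)\subseteq\{0,\pi\}$ as a hypothesis. Everything else is routine; one could equally give the argument diagrammatically, mirroring Lemma \ref{lemma:paramzx:node-equiv} by expanding both scalar gadgets as small ZX-diagrams and applying the parameterised rules, but the Boolean-evaluation route above is shorter and makes the constant $C = \tfrac12$ transparent.
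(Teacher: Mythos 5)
Your core computation is correct and is essentially the paper's: both proofs reduce the claim to a finite check of the identity $e^{i\Psi\Phi/\pi} = \tfrac12\left(1+e^{i\Psi}+e^{i\Phi}-e^{i(\Psi+\Phi)}\right)$ under a polarisation hypothesis, and your $(-1)^u = 1-2u$ expansion is a slicker way of doing the same case analysis. The substantive difference is in the hypothesis, and here you have over-restricted. The paper's proof only requires $\image(\Phi)\subseteq\{0,\pi\}$ and leaves $\Psi$ \emph{completely arbitrary}: casing on $\Phi=0$ gives $1 = \tfrac12(1+e^{i\Psi}+1-e^{i\Psi})$ and casing on $\Phi=\pi$ gives $e^{i\Psi} = \tfrac12(1+e^{i\Psi}-1+e^{i\Psi})$, with no condition on $\Psi$ at all. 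Your version assumes both $\Psi=\pi u$ and $\Phi=\pi v$ are Boolean multiples of $\pi$, and you assert that $\pi$-pairs can only arise that way. That assertion is false within the paper's own development: Lemma \ref{lemma:paramzx:halfpi-equiv} deliberately invokes this lemma with $\Phi=\tfrac{\pi}{2}$ fixed and only $\Psi'$ polarised in $\{0,\pi\}$, so the one-sided generality is load-bearing, not an optional strengthening. Your counterexample $\Psi=\Phi=\pi/4$ correctly shows the identity needs \emph{some} hypothesis, but the right one is that \emph{at least one} of the two phases has image in $\{0,\pi\}$ (and by symmetry of both sides it does not matter which). The paper justifies this hypothesis the same way you attempt to, by inspecting the rules that generate $\pi$-pairs (it lists five sources -- state copy, pivoting, $\pi$-commutation, bialgebra, and the special-case phase-pair -- and observes $\image(\Phi)\subseteq\{0,\pi\}$ in each), so your instinct there is fine; you just need to weaken the conclusion you draw from it to one phase rather than two, and rerun your Boolean check casing only on $v$.
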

\begin{proof}
    There are five sources from which $\pi$-pair subterms, $e^{i\Psi\Phi/\pi}$, may arise, being:
    \begin{itemize}
        \item parameterised state copy (Figure \ref{fig:zx-rules-param}),
        \item parameterised pivoting (Appendix \ref{app:more-rules}),
        \item parameterised $\pi$-commutation (Figure \ref{fig:zx-rules-param}),
        \item parameterised bialgebra (Figure \ref{fig:zx-rules-param}), and
        \item parameterised special case phase-pair (Figure \ref{fig:zx-rules-param}).
    \end{itemize}
    In each case, one may observe that $\image(\Phi)\subseteq\{0,\pi\}$. Under this restriction, the relation holds:
    \begin{equation}
        e^{i\Psi\Phi/\pi} = \frac{1}{2}\left(1+e^{i\Psi}+e^{i\Phi}-e^{i(\Psi +\Phi)}\right)
    \end{equation}
    Explicitly, if $\Phi=0$:
    \begin{align}
        e^0 &= \frac{1}{2}\left(1+e^{i\Psi}+e^0-e^{i(\Psi+0)}\right) \\
        1 &= \frac{1}{2}\left(1+1\right) \nonumber\\
        1 &= 1 \nonumber\\
        \therefore \text{LHS} &= \text{RHS} \nonumber
    \end{align}
    Likewise, if $\Phi=\pi$:
    \begin{align}
        e^{i\Psi} &= \frac{1}{2}\left(1+e^{i\Psi}+e^{i\pi}-e^{i(\Psi+\pi)}\right) \\
        e^{i\Psi} &= \frac{1}{2}\left(1+e^{i\Psi}+(-1)-e^{i\Psi}e^{i\pi}\right) \nonumber\\
        e^{i\Psi} &= \frac{1}{2}\left(1+e^{i\Psi}-1+e^{i\Psi}\right) \nonumber\\
        e^{i\Psi} &= \frac{1}{2}\left(2e^{i\Psi}\right) \nonumber\\
        e^{i\Psi} &= e^{i\Psi} \nonumber\\
        \therefore \text{LHS} &= \text{RHS} \nonumber
    \end{align}
    Hence, graphically:
    \begin{equation}
        \tikzfig{spiderpairbool}
    \end{equation}
    provided $\image(\Phi)\subseteq\{0,\pi\}$.
\end{proof}

\begin{lemma}
    Half-$\pi$ subterms can be reduced to phase-pair subterms:
    \begin{equation}
        e^{i\Psi/2} \rightarrow C\left(1+e^{i\Psi'}+e^{i\Phi}-e^{i(\Psi' +\Phi)}\right)
    \end{equation}
    where $\Psi'=-\Psi+\frac{\pi}{2}$ and $C=\frac{1}{2}$.
    \label{lemma:paramzx:halfpi-equiv}
\end{lemma}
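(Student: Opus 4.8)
Since Lemmas~\ref{lemma:paramzx:node-equiv} and~\ref{lemma:paramzx:pipair-equiv} already reduce node-type and $\pi$-pair subterms to phase-pairs, it suffices to rewrite a half-$\pi$ subterm as a constant multiple of one of those two forms and then compose the constants. This is also the last of the four rows of Table~\ref{tab:paramzx:subtermtypes}, so proving it finishes the claim of Appendix~\ref{app:subterm-equiv}.

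The first step is to trace where half-$\pi$ subterms come from. By Table~\ref{tab:paramzx:subtermtypes} they arise only from the parametric local complementation and pivoting rules of Appendix~\ref{app:more-rules}, and inspection of those rules shows that the factor $e^{i\Psi/2}$ is emitted when eliminating a spider whose polarised phase has the restricted image $\image(\Psi)\subseteq\{\tfrac{\pi}{2},\tfrac{3\pi}{2}\}$ --- the parametric analogue of the ``$\pm\tfrac{\pi}{2}$'' side condition of ordinary local complementation, and the counterpart of the $\image(\Phi)\subseteq\{0,\pi\}$ remark used in the proof of Lemma~\ref{lemma:paramzx:pipair-equiv}. This restriction is what makes the remaining steps legitimate.

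Under it, the key scalar identity is
\begin{equation*}
  e^{i\Psi/2} \;=\; \tfrac{i}{\sqrt{2}}\bigl(1 - e^{i\Psi}\bigr) \;=\; \tfrac{i}{\sqrt{2}}\bigl(1 + e^{i(\Psi+\pi)}\bigr),
\end{equation*}
which one checks by writing $1 - e^{i\Psi} = -2i\,e^{i\Psi/2}\sin(\Psi/2)$ and observing that $\sqrt{2}\,\sin(\Psi/2) = 1$ exactly when $\Psi \in \{\tfrac{\pi}{2},\tfrac{3\pi}{2}\}$. The right-hand side is a constant multiple of a node-type subterm with (again polarised) phase $\Psi + \pi$, so Lemma~\ref{lemma:paramzx:node-equiv} rewrites it as a phase-pair subterm whose first phase is $(\Psi + \pi) + \tfrac{\pi}{2}$; using once more that $\Psi \equiv -\Psi + \pi \pmod{2\pi}$ on the image $\{\tfrac{\pi}{2},\tfrac{3\pi}{2}\}$, this simplifies to $-\Psi + \tfrac{\pi}{2} = \Psi'$, and multiplying the two accumulated constants gives the stated value of $C$. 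I would then transcribe this as a short graphical derivation, in the style of the proof of Lemma~\ref{lemma:paramzx:node-equiv}.

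The main obstacle I anticipate is not a single hard step but the phase bookkeeping around ``$e^{i\Psi/2}$'': unlike the other subterm types, this one depends on the representative of $\Psi$ chosen modulo $2\pi$, since halving does not commute with adding $2\pi$. The care-demanding points are therefore to confirm that the representative actually emitted by the local complementation and pivoting rules is the one for which $\Psi' = -\Psi + \tfrac{\pi}{2}$ and $C = \tfrac{1}{2}$ come out, and to track the $\pi$- and $\tfrac{\pi}{2}$-shifts above consistently so that the companion phase of the resulting phase-pair still has image contained in $\{0,\pi\}$, as required for it to be a genuine phase-pair subterm in the sense of Equation~\ref{eqn:param-scal-expr}.
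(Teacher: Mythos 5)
Your route is sound but genuinely different from the paper's. The paper also begins by restricting to $\image(\Psi)\subseteq\{\frac{\pi}{2},\frac{3\pi}{2}\}$, but it then performs the change of variable $\Psi'=-\Psi+\frac{\pi}{2}$ (so that $\image(\Psi')\subseteq\{0,\pi\}$, with a leftover constant $e^{i\pi/4}$) and observes that $e^{i\Psi'/2}\equiv e^{i\Psi'\Phi/\pi}$ with $\Phi=\frac{\pi}{2}$; i.e.\ it exhibits the half-$\pi$ subterm as a \emph{special case of the $\pi$-pair type} and finishes by invoking Lemma~\ref{lemma:paramzx:pipair-equiv}. You instead use the identity $e^{i\Psi/2}=\frac{i}{\sqrt{2}}\bigl(1+e^{i(\Psi+\pi)}\bigr)$ on the restricted image to exhibit it as a constant multiple of a \emph{node-type} subterm and finish via Lemma~\ref{lemma:paramzx:node-equiv}. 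Your trigonometric identity is correct ($\sqrt{2}\sin(\Psi/2)=1$ on both points of the image), $\Psi+\pi$ is still polarised, and your phase bookkeeping checks out: $(\Psi+\pi)+\frac{\pi}{2}=\Psi+\frac{3\pi}{2}\equiv-\Psi+\frac{\pi}{2}=\Psi'$ using $\Psi\equiv-\Psi+\pi$ on $\{\frac{\pi}{2},\frac{3\pi}{2}\}$. The paper's detour through the $\pi$-pair type is arguably cleaner because it needs no new scalar identity, only a relabelling; your version is more self-contained at the scalar level but chains through a second lemma whose own constant carries the paper's implicit $\frac{1}{\sqrt{2}}$ diagram normalisation.

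That last point is where your write-up has a genuine gap: you assert that ``multiplying the two accumulated constants gives the stated value of $C$'' without computing it, and it does not. Your accumulated constant is $\frac{i}{\sqrt{2}}\cdot\frac{\sqrt{2}}{4}(1-i)=\frac{1+i}{4}=\frac{1}{2\sqrt{2}}e^{i\pi/4}$, which is not $\frac{1}{2}$; it differs from the paper's value by the global phase $e^{i\pi/4}$ (which the paper itself quietly sets aside in its change of variable, absorbing it into the term constant $C_i$) \emph{and} by a real factor of $\frac{1}{\sqrt{2}}$ inherited from the normalisation convention in Lemma~\ref{lemma:paramzx:node-equiv} and Equation~\ref{eqn:param-scal-expr}. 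Since the whole point of these lemmas is to feed exact constants into the GPU evaluation, you should carry out this multiplication explicitly and reconcile the $\frac{1}{\sqrt{2}}$ rather than asserting agreement.
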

\begin{proof}
    Half-$\pi$ subterms arise from instances of parameterised local complementation (Appendix \ref{app:more-rules}), from which it may be observed that $\image(\Psi)\subseteq\{\frac{\pi}{2},\frac{3\pi}{2}\}$.

    These terms may be slightly rewritten with a change of variable:
    \begin{equation}
        e^{- i\frac{\Psi}{2}} e^{i\frac{\pi}{2}} = e^{\frac{i}{2}(-\Psi+\frac{\pi}{2})} e^{\frac{i\pi}{4}} = e^{\frac{i}{2}\Psi'} e^{\frac{i\pi}{4}}
    \end{equation}
    where $\Psi'=-\Psi+\frac{\pi}{2}$ such that $\image(\Psi')\subseteq\{0,\pi\}$. Furthermore:
    \begin{equation}
        e^{i\Psi'/2} \equiv e^{i\Psi'\Phi/\pi}
    \end{equation}
    where $\Phi=\frac{\pi}{2}$. Hence, the half-$\pi$ subterm type is a special case of the $\pi$-pair type, which was shown in Lemma \ref{lemma:paramzx:pipair-equiv} to be reducible to the phase-pair type.
\end{proof}

Given this, all such parameterised scalar expressions are expressible in a consistent format, described by Equation \ref{eqn:param-scal-expr}.

\section{Coalescing the data}
\label{app:coalescing}

Consider an example parametric scalar, in matrix form, consisting of $4$ parameters ($\mathfrak{p}_1,\mathfrak{p}_2,\mathfrak{p}_3,\mathfrak{p}_4$) and $2$ terms:

\begin{table}[!h]
    \centering
    \begin{tabular}{>{\centering\arraybackslash}p{0.75cm} | >{\centering\arraybackslash}p{0.75cm} >{\centering\arraybackslash}p{0.75cm} >{\centering\arraybackslash}p{0.75cm} >{\centering\arraybackslash}p{0.75cm} >{\centering\arraybackslash}p{0.75cm} | >{\centering\arraybackslash}p{0.75cm} >{\centering\arraybackslash}p{0.75cm} >{\centering\arraybackslash}p{0.75cm} >{\centering\arraybackslash}p{0.75cm} >{\centering\arraybackslash}p{0.75cm} }
    \hline
        \textbf{$*$} & 
        \textbf{$4\alpha/\pi$} & 
        \textbf{$\mathfrak{p}_1^\psi$} & 
        \textbf{$\mathfrak{p}_2^\psi$} & 
        \textbf{$\mathfrak{p}_3^\psi$} & 
        \textbf{$\mathfrak{p}_4^\psi$} &
        \textbf{$4\beta/\pi$} & 
        \textbf{$\mathfrak{p}_1^\phi$} & 
        \textbf{$\mathfrak{p}_2^\phi$} & 
        \textbf{$\mathfrak{p}_3^\phi$} & 
        \textbf{$\mathfrak{p}_4^\phi$} \\ \hline

        1 & 2 & 1 & 1 & 0 & 0 & 7 & 1 & 1 & 1 & 0 \\
        1 & 4 & 0 & 1 & 1 & 1 & 3 & 0 & 1 & 0 & 1 \\
        0 & 0 & 0 & 0 & 0 & 0 & 0 & 0 & 0 & 0 & 0 \\
        1 & 6 & 1 & 0 & 0 & 0 & 3 & 1 & 1 & 1 & 1 \\
        1 & 5 & 1 & 1 & 1 & 1 & 1 & 1 & 0 & 1 & 0 \\
        1 & 2 & 0 & 0 & 0 & 1 & 4 & 1 & 1 & 0 & 0 \\\hline
    \end{tabular}
\end{table}

While for practical purposes this data is treated as two-dimensional, as far as the memory is concerned it is in fact, necessarily stored linearly. Conventionally, this would be stored, linearised, in row-major order (storing the cells of the first row, followed by those of the second row and so on). This is appropriate when the data is to be processed row-by-row on the CPU (whose processing pattern is inherently sequential), as each subsequent element of data to be processed is stored immediately after the previous, ensuring convenient (quicker) access.

However, if instead, as in this case, each row of this data is to be processed in parallel on the GPU, then this memory arrangement proves to be suboptimal. In this scenario, it is preferential to store the data in column-major order (meaning all the cells of the first column are stored one after the other, followed then by the cells of the second column and so on). A justification for this follows:

\begin{lemma}
To better optimise for GPU processing, the two-dimensional subterm data should be stored in memory in column-major order (rather than the more typical row-major order that 2D arrays tend to be stored as).
\label{lemma:colmajor}

\end{lemma}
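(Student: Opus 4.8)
The plan is to argue this from the GPU memory model rather than from pure mathematics: the claim is really that column-major storage maximises \emph{memory coalescing}, and the proof amounts to comparing the global-memory access patterns that the two layouts induce when the evaluation kernel of the preceding section is run. First I would fix the setup: the array has $R$ rows (one per subterm, padded to a common count so that all rows have the same width $W$), one GPU thread is assigned to each row, and the threads of a warp proceed in lock-step; hence at any instruction that reads the subterm data, all $32$ threads of a warp read the cell in the \emph{same} column but in $32$ consecutive rows. I would then recall the relevant hardware fact (from \cite{cuda}): a warp's simultaneous loads are serviced in as few memory transactions as possible, a single transaction covering one contiguous aligned block, so the cost of a warp-wide load is essentially the number of distinct blocks it touches.

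Next I would compute, for each layout, which offsets a warp touches at a fixed kernel step that reads column $k$. Under row-major storage the cell $(i,k)$ lives at offset $iW+k$, so a warp touches the offsets $\{iW+k : i_0 \le i < i_0+32\}$, i.e.\ $32$ addresses separated by stride $W$; once $W$ exceeds the transaction block size these lie in $32$ distinct blocks, so the warp pays up to $32\times$ the minimal traffic. Under column-major storage the cell $(i,k)$ lives at offset $kR+i$, so the warp touches the $32$ \emph{consecutive} offsets $kR+i_0,\ldots,kR+i_0+31$, which span at most two aligned blocks and therefore coalesce into essentially one transaction. Comparing the two bounds yields an advantage factor of roughly $\min(32,W)$ (more generally, the warp size), which is exactly the content of the lemma, and completes the argument.

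Finally I would state the hypotheses under which the conclusion holds — that the data is genuinely traversed column-by-column with one thread per row, as in the evaluation procedure described above, and that the padding to a uniform subterm count keeps the column-major offsets contiguous across every warp — and then note the two minor caveats. The main obstacle is not any calculation but pinning the claim down precisely enough to prove: ``better optimised for GPU processing'' has to be cashed out as ``fewer (coalesced) global-memory transactions per warp'', after which the proof is just a comparison of strides. The secondary subtlety worth flagging is boundary behaviour — warps that straddle the end of a column, and the extra transaction caused by misalignment — which affects the constant but not the qualitative $\Theta(\text{warp size})$ improvement.
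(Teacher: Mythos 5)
Your proposal is correct and follows essentially the same route as the paper's proof: both arguments rest on the observation that the lock-step threads all read the same column at the same time, so storing each column contiguously lets a warp's loads coalesce into few memory transactions, whereas row-major storage forces strided access. Your version is simply more quantitative (explicit offset arithmetic, the $\min(32,W)$ factor, and the boundary caveats) than the paper's informal one-paragraph justification, but the underlying idea is identical.
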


\begin{proof}
When the rows are to be processed in parallel, they will process their respective \textit{dummy flags} (their data of the first column) at the same time, and then their respective \textit{constant} factors (the second column) at the same time, and so on. Consequently, storing all the \textit{dummy flags} (the first column) together in memory means that they can all be retrieved very efficiently, with many being moved in one transaction (as opposed to individually locating and sending each row’s dummy flag one by one). For this reason, the two-dimensional data of the node-type subterms (and likewise for the 2D arrays of other subterm types) is better stored in column-major order, for a far more efficient (`\textit{coalesced}') memory access pattern.
\end{proof}

\section{Parallelised summation algorithm}
\label{app:para-sum-alg}

Given an array of $n$ numbers, the best method for summing (or alternatively multiplying) all its elements - via sequential (i.e. single-core CPU) computation - is to simply iterate linearly through the whole array while maintaining a total tally. This gives a runtime complexity of $O(n)$.

Alternatively, by instead processing the data on the GPU, one may compute such a calculation while taking advantage of parallelism. Consider, for example, the following 10-element array:

\begin{center}
$[\;0,\;1,\;2,\;3,\;4,\;5,\;6,\;7,\;8,\;9\;]$
\end{center}

One can allocate a single GPU thread for every adjacent pair of elements. For the general case of an $n$-element array, this means allocating $n/2$ threads, which in this example case means 5 threads, divided as follows:

\begin{center}
$[ \;\; \textbf{0},\;1, \;\;|\;\; \textbf{2},\;3, \;\;|\;\; \textbf{4},\;5, \;\;|\;\; \textbf{6},\;7, \;\;|\;\; \textbf{8},\;9 \;\;]$
\end{center}

Each thread, running in parallel, may then sum the two elements under its consideration and overwrite its left element with the result, like so:

\begin{center}
$[ \;\; \textbf{1},\;\textcolor{lightgray}{1}, \;\;|\;\; \textbf{5},\;\textcolor{lightgray}{3}, \;\;|\;\; \textbf{9},\;\textcolor{lightgray}{5}, \;\;|\;\; \textbf{13},\;\textcolor{lightgray}{7}, \;\;|\;\; \textbf{17},\;\textcolor{lightgray}{9} \;\;]$
\end{center}

The right element of each pair can hereafter be ignored and so is written in grey here. The process can then repeat - this time with $n/4$ threads being allocated to consider, respectively, adjacent groups of 4 elements, like so:

\begin{center}
$[ \;\; \textbf{1},\;\textcolor{lightgray}{1},\;5,\;\textcolor{lightgray}{3}, \;\;|\;\; \textbf{9},\;\textcolor{lightgray}{5},\;13,\;\textcolor{lightgray}{7}, \;\;|\;\; \textbf{17},\;\textcolor{lightgray}{9} \;\;]$
\end{center}

Now, for each thread, the leftmost element will be incremented by the value contained 2 elements to its right (unless that index exceeds the length of the array):

\begin{center}
$[ \;\; \textbf{6},\;\textcolor{lightgray}{1},\;\textcolor{lightgray}{5},\;\textcolor{lightgray}{3}, \;\;|\;\; \textbf{22},\;\textcolor{lightgray}{5},\;\textcolor{lightgray}{13},\;\textcolor{lightgray}{7}, \;\;|\;\; \textbf{17},\;\textcolor{lightgray}{9} \;\;]$
\end{center}

This process then repeats iterative, halving the number of threads needed each time and doubling, for each thread, the number of elements considered and the gap between its relevant elements. This procedure follows as such:

\begin{center}
$[ \;\; \textbf{6},\;\textcolor{lightgray}{1},\;\textcolor{lightgray}{5},\;\textcolor{lightgray}{3},\; 22,\;\textcolor{lightgray}{5},\;\textcolor{lightgray}{13},\;\textcolor{lightgray}{7}, \;\;|\;\; \textbf{17},\;\textcolor{lightgray}{9} \;\;]$
\end{center}

\begin{center}
$[ \;\; \textbf{28},\;\textcolor{lightgray}{1},\;\textcolor{lightgray}{5},\;\textcolor{lightgray}{3},\; \textcolor{lightgray}{22},\;\textcolor{lightgray}{5},\;\textcolor{lightgray}{13},\;\textcolor{lightgray}{7}, \;\;|\;\; \textbf{17},\;\textcolor{lightgray}{9} \;\;]$
\end{center}

\begin{center}
$[ \;\; \textbf{28},\;\textcolor{lightgray}{1},\;\textcolor{lightgray}{5},\;\textcolor{lightgray}{3},\; \textcolor{lightgray}{22},\;\textcolor{lightgray}{5},\;\textcolor{lightgray}{13},\;\textcolor{lightgray}{7},\; 17,\;\textcolor{lightgray}{9} \;\;]$
\end{center}

\begin{center}
$[ \;\; \textbf{45},\;\textcolor{lightgray}{1},\;\textcolor{lightgray}{5},\;\textcolor{lightgray}{3},\; \textcolor{lightgray}{22},\;\textcolor{lightgray}{5},\;\textcolor{lightgray}{13},\;\textcolor{lightgray}{7},\; \textcolor{lightgray}{17},\;\textcolor{lightgray}{9} \;\;]$
\end{center}

Once the segment size exceeds the length of the array, the procedure is complete and the final result (in this case $45$) - being the sum of all the elements in the original array - is stored in the first element.

Theoretically, this method computes in as few as $log_{2}n$ iterations, rather than $n$. Realistically, however, as the number of threads on a GPU is finite, for sufficiently large arrays the theoretically parallelised steps won't all occur simultaneously but rather in locally parallelised batches. Nevertheless, even as $n\rightarrow\infty$ this method requires significantly fewer iterations than the na\"ive approach. And lastly, note that the method also works likewise for multiplication as well as summation.

The pseudocode for this method is provided in algorithm \ref{alg:sum}, where the \textit{PSum} procedure computes in parallel on $n$ threads - with each given a successive thread index provided by \textit{GetThreadIndex()}.

This algorithm (or variations thereof) is used as described in the paper, for both multiplying the subterms within each term as well as summing all the terms in the overall scalar. The only major difference to note is that instead of the simple number summing calculation on line \ref{line:add}, the complex numbers are summed (and later multiplied) according to the relations described in Lemmas \ref{lemma:sumComplex} and \ref{lemma:multComplex}.

\begin{lemma}
    \label{lemma:sumComplex}
    Two complex numbers, $\psi,\phi\in\mathbb{C}$, each expressed via four simple real values, $A_x,B_x,C_x,D_x\in\mathbb{R}\;\forall x$, in the form:
    \begin{equation}
        \begin{aligned}
            \psi&=A_\psi+B_\psi\sqrt{2}+i\left(C_\psi+D_\psi\sqrt{2}\right)\\
            \phi&=A_\phi+B_\phi\sqrt{2}+i\left(C_\phi+D_\phi\sqrt{2}\right)
        \end{aligned}
    \end{equation}
    may be summed together, $\vartheta=\psi+\phi$, using only simple real arithmetic:
    \begin{equation}
        \vartheta=A_\vartheta+B_\vartheta\sqrt{2}+i\left(C_\vartheta+D_\vartheta\sqrt{2}\right)
    \end{equation}
    where here:
    \begin{equation}
        A_{\vartheta} = A_\psi + A_\phi
    \end{equation}
    \begin{equation*}
        B_{\vartheta} = B_\psi + B_\phi
    \end{equation*}
    \begin{equation*}
        C_{\vartheta} = C_\psi + C_\phi
    \end{equation*}
    \begin{equation*}
        D_{\vartheta} = D_\psi + D_\phi
    \end{equation*}
\end{lemma}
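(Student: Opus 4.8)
The plan is to verify the stated identity by a direct computation in $\mathbb{C}$, using nothing beyond the commutativity and associativity of addition, the distributivity of multiplication over addition, and the fact that the coordinate representation in question is well defined.

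First I would substitute the hypothesised forms of $\psi$ and $\phi$ into $\vartheta=\psi+\phi$, giving
\[
\vartheta=\bigl(A_\psi+B_\psi\sqrt{2}+i(C_\psi+D_\psi\sqrt{2})\bigr)+\bigl(A_\phi+B_\phi\sqrt{2}+i(C_\phi+D_\phi\sqrt{2})\bigr).
\]
Then I would reassociate and regroup the eight summands: combine the purely rational parts into $A_\psi+A_\phi$, factor $\sqrt{2}$ out of $B_\psi\sqrt{2}+B_\phi\sqrt{2}$ to obtain $(B_\psi+B_\phi)\sqrt{2}$, and factor $i$ out of the imaginary contributions to obtain $i\bigl((C_\psi+C_\phi)+(D_\psi+D_\phi)\sqrt{2}\bigr)$. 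Matching this against the target form $\vartheta=A_\vartheta+B_\vartheta\sqrt{2}+i(C_\vartheta+D_\vartheta\sqrt{2})$ reads off $A_\vartheta=A_\psi+A_\phi$, $B_\vartheta=B_\psi+B_\phi$, $C_\vartheta=C_\psi+C_\phi$, and $D_\vartheta=D_\psi+D_\phi$, which is exactly the claim.

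There is essentially no obstacle here: the content of the lemma is just that addition in this fixed four-real-number encoding is componentwise. The only remark worth making is that the encoding $x=A+B\sqrt{2}+i(C+D\sqrt{2})$ is unambiguous because $1$ and $\sqrt{2}$ are linearly independent over $\mathbb{Q}$ (and in the intended application $A,B,C,D$ are dyadic rationals), so the four output quantities are well defined without any case split on $\Psi$ or $\Phi$. This is precisely the property that makes the encoding suitable for the lock-step, arithmetic-only GPU kernels described above; the companion multiplication rule of Lemma \ref{lemma:multComplex} follows from a similar but slightly longer computation that additionally uses $\sqrt{2}\cdot\sqrt{2}=2$.
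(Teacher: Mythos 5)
Your proof is correct: the paper states this lemma without any proof, treating it as immediate, and your direct substitute-and-regroup computation is exactly the intended (and only natural) justification. The only quibble is that with $A,B,C,D\in\mathbb{R}$ as in the statement the encoding is not literally unique, but as you note the coefficients are dyadic rationals in the intended application, and in any case the lemma only requires that the given componentwise formulas yield \emph{a} valid representation of $\psi+\phi$, which your calculation establishes.
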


\begin{lemma}
    \label{lemma:multComplex}
    Two complex numbers, $\psi,\phi\in\mathbb{C}$, each expressed via four simple real values, $A_x,B_x,C_x,D_x\in\mathbb{R}\;\forall x$, in the form:
    \begin{equation}
        \begin{aligned}
            \psi&=A_\psi+B_\psi\sqrt{2}+i\left(C_\psi+D_\psi\sqrt{2}\right)\\
            \phi&=A_\phi+B_\phi\sqrt{2}+i\left(C_\phi+D_\phi\sqrt{2}\right)
        \end{aligned}
    \end{equation}
    may be multiplied together, $\vartheta=\psi\times\phi$, using only simple real arithmetic:
    \begin{equation}
        \vartheta=A_\vartheta+B_\vartheta\sqrt{2}+i\left(C_\vartheta+D_\vartheta\sqrt{2}\right)
    \end{equation}
    where:
    \begin{equation}
        \begin{aligned}
            A_\vartheta&=A_\psi A_\phi+2B_\psi B_\phi -C_\psi C_\phi -2D_\psi D_\phi \\
            B_\vartheta&=A_\psi B_\phi + B_\psi A_\phi -C_\psi D_\phi - D_\psi C_\phi \\
            C_\vartheta&=A_\psi C_\phi + 2B_\psi D_\phi +C_\psi A_\phi +2D_\psi B_\phi \\
            D_\vartheta&=A_\psi D_\phi + B_\psi C_\phi + C_\psi B_\phi +D_\psi A_\phi
        \end{aligned}
    \end{equation}
\end{lemma}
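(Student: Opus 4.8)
The plan is to prove this by brute-force distribution in the commutative ring generated over $\mathbb{R}$ by $\sqrt{2}$ and $i$, using only the two defining relations $\sqrt{2}\cdot\sqrt{2}=2$ and $i\cdot i=-1$. First I would write each of $\psi$ and $\phi$ as a sum of four monomials with real coefficients, namely a multiple of $1$, of $\sqrt{2}$, of $i$, and of $i\sqrt{2}$. Multiplying these out gives sixteen products, which I would organise in a $4\times 4$ table indexed by the monomial type of the factor taken from $\psi$ and the monomial type of the factor taken from $\phi$.

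The second step is to normalise each of the sixteen entries back into one of the four canonical monomial types. A product of two ``$\sqrt 2$''-type monomials picks up a factor of $2$; a product of two ``$i$''-type monomials picks up a sign $-1$; a product of an ``$i\sqrt 2$'' with an ``$i\sqrt 2$'' picks up $-2$; and so on. After this normalisation every one of the sixteen terms sits in exactly one of the four slots (real rational, real $\sqrt 2$, imaginary rational, imaginary $\sqrt 2$), and summing the coefficients within each slot --- an instance of the additive bookkeeping already recorded in Lemma \ref{lemma:sumComplex} --- produces exactly the four stated expressions for $A_\vartheta$, $B_\vartheta$, $C_\vartheta$, $D_\vartheta$. (A quick sanity check: the ``$\sqrt 2$''-free real part collects $A_\psi A_\phi$, the two cross terms $2B_\psi B_\phi$, $-C_\psi C_\phi$ from $i^2$, and $-2D_\psi D_\phi$ from $(i\sqrt 2)^2$, matching $A_\vartheta$.)

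Finally I would observe that no uniqueness of the representation is required: since the coefficients range over all of $\mathbb{R}$, the pair $\{1,\sqrt 2\}$ is not linearly independent over the reals, so it suffices to have exhibited \emph{one} decomposition of $\psi\phi$ of the required shape, which the slot-wise collection does. It then follows that $\vartheta=\psi\phi$, and, as desired, the whole computation uses only real additions and multiplications among the eight input numbers $A_\psi,\dots,D_\phi$, with no complex or irrational arithmetic --- the property that makes it suitable for a GPU kernel. The only genuine hazard in the argument is clerical: tracking the signs coming from $i^2=-1$ and the factors of $2$ coming from $\sqrt{2}^2=2$ consistently across all sixteen terms, which the $4\times 4$ table is designed to keep under control.
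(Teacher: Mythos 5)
Your proposal is correct: the sixteen-term expansion, normalised via $\sqrt{2}\cdot\sqrt{2}=2$ and $i\cdot i=-1$ and collected slot-wise into the four monomial types $1$, $\sqrt{2}$, $i$, $i\sqrt{2}$, reproduces exactly the stated formulas for $A_\vartheta$, $B_\vartheta$, $C_\vartheta$, $D_\vartheta$. The paper gives no proof of this lemma at all --- it is stated as an elementary bookkeeping fact --- so your direct distribution argument is precisely the computation the paper leaves implicit, and your closing remark that only one valid decomposition (not uniqueness of the representation) is needed is a fair, if minor, point of rigour.
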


\begin{algorithm}
\caption{The GPU-parallelised summation algorithm}\label{alg:sum}
\begin{algorithmic}[1]

\State Initialise and populate $ARR[N_{ELEMS}]$ \Comment{The array whose elements are to be summed}

\State

\Procedure{PSum}{$split,gap$}{$<<n>>$} \Comment{This is a GPU kernel, executing on $n$ threads}
    \State $i \gets $ \Call{GetThreadIndx}{$ $} \Comment{This function returns the unique thread index}
    \State $elem \gets i \times split$
    \If{$elem + gap < N_{ELEMS} - 1$}
        \State $ARR[elem] \gets ARR[elem] + ARR[elem+gap]$ \label{line:add}
    \EndIf
\EndProcedure

\State

\State $split \gets 2$
\State $gap \gets 1$
\While{$gap < N_{ELEMS}$}
    \State $n_{chunks} \gets \lceil N_{ELEMS}/split \rceil$
    \State \Call{PSum}{$split, gap$}{$<<n_{chunks}>>$}
    \State $split \gets split \times 2$
    \State $gap \gets gap \times 2$
\EndWhile

\end{algorithmic}
\end{algorithm}

\section{Repeated weak simulation}
\label{app:repeated-weak}

In addition to the use-case of computing marginal probabilities via the summing method (left-hand side of Figure \ref{fig:classicsim}), this method may also be applied to repeated strong simulation, and indeed also repeated \textit{weak} simulation. Its relevance to the former is perhaps obvious, though to the latter may need further explanation.

In this situation, each ($1\leq k\leq n$) marginal probability computation of an $n$-qubit circuit, $U$, can be computed parametrically (see right-hand side of Figure \ref{fig:classicsim}). With this approach, at each iteration (i.e. in incrementing $k$) of computing the next marginal probability, one may deduce the next bit in the final bitstring, for any number $N$ of independent bitstrings. Hence, one will ultimately produce $N$ such bitstrings denoting $N$ independent samples of weak simulation, while only ever reducing $n$ doubled ZX-diagram, given $n$ qubits.

This is summarised in algorithm \ref{alg:paramzx:param-weak-sim}, where:
\begin{equation}
    P(\mathfrak{a}_1\cdots\mathfrak{a}_k) \overset{r}{\leftarrow} \bra{0\cdots0}{U^\dagger(\ket{\mathfrak{a}_1\cdots\mathfrak{a}_k}\bra{\mathfrak{a}_1\cdots\mathfrak{a}_k}\otimes I_{n-k})U}\ket{0\cdots0}
\end{equation}
denotes the reduction of the doubled marginal probability ZX-diagram (right-hand side of Figure \ref{fig:classicsim}) to a parameterised scalar expression, $P(\mathfrak{a}_1\cdots\mathfrak{a}_k)$, denoting its outcome probability. $P(B)$ then denotes a GPU-based evaluation of this scalar expression for the bitstring $B=a_1\cdots a_k$, where $a_i\in\mathbb{B}\;\forall i$. Meanwhile, $X||Y$ denotes the concatenation of $X$ and $Y$, such that $[a,b,c]||d=[a,b,c,d]$, and $\texttt{Rand}()$ returns a random floating point number in the range $[0,1)$.

\begin{algorithm}
    \caption{Algorithm for efficient repeated weak simulation}
    \label{alg:paramzx:param-weak-sim}
    \begin{algorithmic}[1]

    \State \textbf{Input:} A quantum circuit $U$ with $n_{qubits}$ qubits, and $N_{evals}$
    \State
    \State $B_i=[\;]\;\;\forall i\in\{1,\ldots,n_{qubits}\}$ \Comment{Initialise empty lists}
    
    \For{$k=1$ to $n_{qubits}$}
        \State $P(\mathfrak{a}_1\cdots\mathfrak{a}_k) \overset{r}{\leftarrow} \bra{0\cdots0}{U^\dagger(\ket{\mathfrak{a}_1\cdots\mathfrak{a}_k}\bra{\mathfrak{a}_1\cdots\mathfrak{a}_k}\otimes I_{n-k})U}\ket{0\cdots0}$
        
        \For{$i=1$ to $N_{evals}$}
            \State $P_{B_i||0}=P(B_i||0)$ \Comment{GPU-based evaluation}
            \State $b \gets 0\;\;\textbf{if}\;\;\texttt{Rand}() < P_{B_i||0}\;\;\textbf{else}\;1$
            \State $B_i=B_i||b$
        \EndFor
        
    \EndFor

    \State
    \State \textbf{Output:} $B_i\;\forall i$
    
    \end{algorithmic}
\end{algorithm}

There are further optimisations that may be made to this algorithm, with some particular redundancy among the first few iterations of $k$, where identical evaluations are inevitable. But such considerations are perhaps superfluous.

\section{Sigmoid curves}
\label{app:sigmoid}

The relationship between the speedup factor, $S_N$, and the number of evaluations, $N$, can be visualised as a sigmoid curve of the form:
\begin{equation}
    S_N\left(N\right)=S_{\infty}\cdot\frac{N}{N_{inflec}+N}
\end{equation}
as in Figure \ref{fig:sigmoid}, where $N_{inflec}$ is the inflection point: that is the value of $N$ at which $S_N=S_\infty/2$. In fact, $S_\infty$ may be calculated as the ratio of the evaluation times of the two methods, while $N_{inflec}$ may be calculated as the ratio of the initialisation time against the evaluation time of the new method.

\begin{figure}[h!]
    \centering
    \includegraphics[scale=0.6]{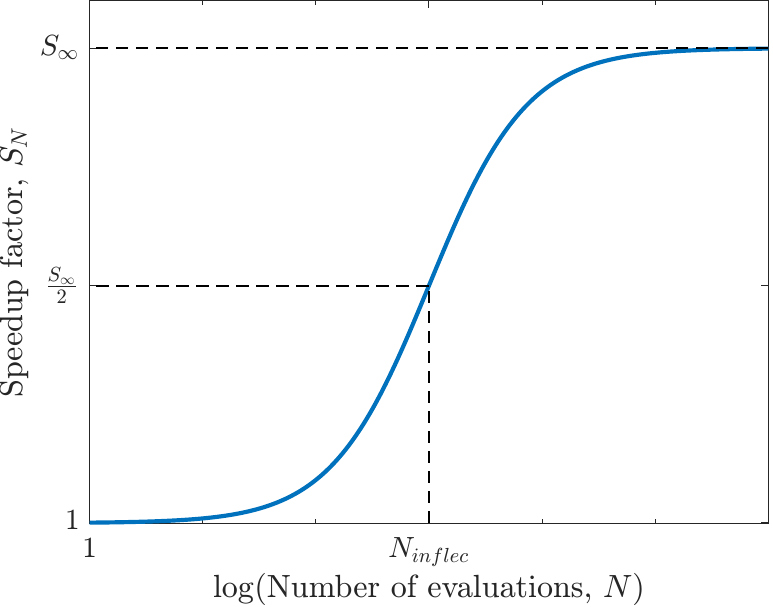}
    \caption{The relationship between the speedup factor, $S_N$, and the number of evaluations, $N$, for any particular ZX-diagram may be visualised as a sigmoid curve. Note the logarithmic $x$-axis.}
    \label{fig:sigmoid}
\end{figure}

\end{document}